\newtheorem{lemma}{Lemma}
\newcommand{\tvec}[1]{\ensuremath{\Tilde{\vec{#1}}}}
\renewcommand{\vec}[1]{\ensuremath{\vec{#1}}}
\newcommand{\norm}[1]{\ensuremath{\| #1 \|}}
\newcommand{\mc}[1]{\ensuremath{\mathcal{#1}}}
\newcommand{\Complex}{{\mathbb{C}}}
\newcommand{\Nat}{{\mathbb{N}}}
\newcommand{\floor}[1]{\lfloor #1 \rfloor}
\newcommand{\ceil}[1]{\lceil #1 \rceil}
\DeclareMathOperator{\E}{E}
\DeclareMathOperator{\tr}{tr}
\DeclareMathOperator{\diag}{diag}
\renewcommand{\eqref}[1]{(\ref{eq:#1})}
\newcommand{\Figref}[1]{Figure~\ref{fig:#1}}
\newcommand{\secref}[1]{Section~\ref{sec:#1}}
\newcommand{\appref}[1]{Appendix~\ref{app:#1}}
\renewcommand{\vec}[1]{\ensuremath{\mathbf{#1}}}
\begin{document}

\title{Change Detection with Sparse Signals \\ using Quantum Designs}
\author{
\IEEEauthorblockN{Aditi Jain \IEEEauthorrefmark{1}, Pradeep Sarvepalli \IEEEauthorrefmark{2}, 
Srikrishna Bhashyam \IEEEauthorrefmark{3},  Arun Pachai Kannu \IEEEauthorrefmark{4}}\\
  \IEEEauthorblockA{Department of Electrical Engineering \\
		    Indian Institute of Technology Madras \\
		    Chennai - 600036, India\\
		    Email: \IEEEauthorrefmark{1}ee15s081@ee.iitm.ac.in, \IEEEauthorrefmark{2}pradeep@ee.iitm.ac.in,
			   \IEEEauthorrefmark{3}skrishna@ee.iitm.ac.in, \IEEEauthorrefmark{4}arunpachai@ee.iitm.ac.in }
}

\maketitle

\begin{abstract}
We consider the change detection problem where the pre-change observation vectors are purely noise and the post-change observation vectors 
are noise-corrupted compressive measurements of sparse signals with a common support, measured using a sensing matrix. 
In general, post-change distribution of the observations depends on parameters such as the support and variances of the sparse signal. 
When these parameters are unknown, we propose two approaches. In the first approach, we approximate the post-change pdf based on the known
parameters such as mutual coherence of the sensing matrix and bounds 
on the signal variances. In the second approach, we parameterize the post-change pdf with an unknown parameter and try to adaptively 
estimate this parameter using a stochastic gradient descent method. In both these approaches, we employ CUSUM algorithm with 
various decision statistics such as the energy of the observations, correlation values with columns of the sensing matrix and the 
maximum value of such correlations. We study the performance 
of these approaches and offer insights on the relevance of different decision statistics in different SNR regimes.  We also address the problem 
of designing sensing matrices with small coherence by using designs from quantum information theory. 
One such design, %from theory of equi-angular lines 
called SIC POVM, also has an additional structure which allows exact computation of the post-change pdfs of some  decision statistics even when the support set of the sparse signal is unknown. We apply our detection algorithms with SIC POVM based sequences to a massive random access problem and
show their superior performance over conventional Gold codes. 
\end{abstract}

\begin{IEEEkeywords}
CUSUM algorithm, detection delay, average run length, sensing matrix design, mutual coherence, quantum information theory 
\end{IEEEkeywords}

\section{Introduction}
\label{sec:Intro}

The problem of change detection using statistical tests has been studied over several decades \cite{Lorden:AMS:71,Pollak:AS:85,Banerjee:tut}. 
The simplest model for change detection problems is described below.
The observation at time $t$ is denoted as $y[t]$. Let $\nu \in \Nat^+$ denote the change point such that 
the observations before and after change follow different statistics.
Specifically, the observations $\{y[t],~t \geq 0 \}$ 
are independent and follow the statistics,
\begin{eqnarray}
y[t] &\sim& \left\{ \begin{array}{ll} f_{0} & 0 \leq t < \nu, \\ f_{1} & t \geq \nu, \end{array} \right. \label{eq:stat}
\end{eqnarray}
where $f_0$ and $f_1$ are the pre-change and post-change probability density functions (pdf) respectively. 
When the change point $\nu$ is unknown and non-random, the quantities of interest are the average run length $T_r$ and the 
worst-case detection delay $D_w$. These quantities are mathematically defined below. 
We use $\E_{\nu}$ to denote the expectation with respect to the probability measure on the observations when the change point is $\nu$. We set $\nu=\infty$ when there is no change. With $T$ being the time at which a given algorithm declares change (which is random), the average run length and the worst case detection delay of the algorithm is given as
\begin{eqnarray}
T_r &=& \E_\infty \{ T \}, \label{eq:rl} \\
D_w &=& \sup_{\nu \geq 1} \E_\nu \{ (T-\nu) | T \geq \nu \}. \label{eq:wd} 
\end{eqnarray}
%From \cite{Banerjee:tut}, the worst case detection delay  occurs when the change point $\nu=1$.
 The CUSUM algorithm \cite{Page:Biometrika:54} for change detection uses the log likelihood ratio (LLR) for each observation, which is given as $L(y[t]) = \log \frac{f_1(y[t])}{f_0(y[t])}$. The CUSUM metric $W[t]$ is initialized to $W[-1] = 0$ and is recursively computed as
\begin{equation*}
W[t] = \left (W[t-1] + L(y[t]) \right )^+,
\end{equation*}
where $(\cdot)^+$ denotes $\max \{\cdot,0\}$. The CUSUM decision rule $\mc{R}$ using the metric $W[t]$, with threshold $\tau \in (0,\infty)$,
is given as
\begin{eqnarray}
\mc{R} &=& \left \{\begin{array}{ll} \text{Declare change at time $t$} & \text{if}~ W[t] > \tau, \\ 
\text{Continue} & \text{otherwise.} \end{array} \right. \label{eq:cusum}
\end{eqnarray}
The threshold parameter $\tau$ in the above rule controls the average run length and the detection delay. 
It is shown in  \cite{Lorden:AMS:71} that CUSUM algorithm asymptotically minimizes the worst case detection delay, subject to 
a constraint on the average run length $T_r \geq \gamma$, as the threshold $\tau \rightarrow \infty$ 
(or equivalently as $\gamma \rightarrow 
\infty$).

Several variations of the model in \eqref{stat} have been addressed in the literature, considering cases where the pre-change  
\cite{mei} or post-change distributions \cite{Nikiforov:TIT:00} have unknown parameters. Adaptive algorithms to estimate 
the unknown parameters in the post-change distributions have been developed in \cite{Li:ICASSP:09,Singamasetty:ICASSP:17}. 

In natural and practical scenarios, most signals have sparse representations in an appropriately chosen basis. 
Compressive sensing   deals with the problem of reconstructing sparse signals from under-determined linear measurements \cite{Candes:SPM:08}. 
Orthogonal matching pursuit (OMP) is a popular sparse signal reconstruction technique \cite{Tropp:omp}
which works based on the correlation of the observation with columns of the sensing matrix. Detecting sparse signals 
in the presence of noise has been addressed in several papers such as \cite{Duarte:ICASSP:06,Davenport:JSTSP:10,Wimalajeewa:TSIPN:17, Haupt:ICASSP:07,Wang:ICASSP:08}, where detection is performed based on various statistics such as energy, correlation values and partially 
recovered support. In \cite{Zheng:ICC:11,Lei:ICSP:10}, the authors have developed a sequential approach based on LLR to detect sparse signals in the presence of noise.

In this paper, we consider the change detection problem wherein the pre-change observation vectors are purely noise and 
the post-change observation vectors are noise-corrupted compressive measurements of sparse signals with a common support, measured using
a sensing matrix. When the support and the variances of the non-zero entries of a sparse signal are unknown, the post change distributions of the observations (and other decision statistics) are not known perfectly. Change 
detection with sparse signals have been previously addressed in \cite{Alippi:ICASSP:14,Fellouris:ICASSP:17}. While \cite{Alippi:ICASSP:14} 
addresses the problem where the sparsifying dictionary of the signal is unknown, \cite{Fellouris:ICASSP:17} addresses the case where the 
observation has the same dimension as that of the sparse signal. In our work, the sparsifying dictionary is assumed to be known. However, we allow the dimension of the observation  to be much smaller than the dimension of the sparse signal. We develop change detection algorithms using various decision statistics and show their relevance in regimes with different signal to noise ratio (SNR). We also design sensing matrices using constructions from 
quantum information theory and show that they perform better than random constructions. More details on our system model and contributions 
are discussed in the following section.
\footnote{\emph{Notation:} Scalars are denoted by lowercase letters. Matrices (vectors) are denoted by uppercase (lowercase) boldface letters. The $i$-th column (entry) of $\vec{A}$ ($\vec{x}[t]$) is denoted by $\vec{a}_i$ ($x_i[t]$). The entry in $i$-th row and $k$-th column of $\vec{A}$ is denoted by $a_{ik}$. We denote transpose by $(\cdot)^T$, conjugate transpose by $(\cdot)^*$, inverse by $(\cdot)^{-1}$, trace by $tr(\cdot)$, $\ell_p$ norm by $\norm{\cdot}_p$. Calligraphic letters denote sets like $\mathcal{S}$. We use $\vec{A}_\mathcal{S}$  ($\vec{x}_\mathcal{S}$) to denote the sub-matrix (sub-vector) of $\vec{A}$  ($\vec{x}$) consisting of columns (entries) whose index belongs to $\mathcal{S}$. $|\cdot |$ denotes the absolute value of a scalar, as well as the cardinality of a set, which will be apparent from the context. $\floor{.}$ and $\ceil{.}$ denote the floor and ceil values of their arguments. We use $\tilde{\theta}$ to denote an approximation of $\theta$. We denote a zero vector of dimension $N$ by $\vec{0}_N$, identity matrix by $\vec{I}$ and $\sqrt{-1}$ by $j$. We use $\diag([\vec{d}])$ to denote a diagonal matrix with elements of vector $\vec{d}$ as its diagonal entries. For observations $d[t]$, we denote pre-change by $f_0^D$ and post-change pdf by $f_1^D$. $\tilde{f}_1^D$ denotes the approximation of $f_1^D$ and $f_1^{D,\theta}$ denotes the post-change pdf of $d[t]$ parametrized by $\theta$. $\mathrm{N}(\mu,\sigma^2)$ denotes Gaussian and $\mathrm{CN}(\mu,\sigma^2)$ denotes complex Gaussian distribution with mean $\mu$ and variance $\sigma^2$. $\exp{(\lambda)}$ denotes exponential distribution with rate parameter $\lambda$. $\mathcal{X}^2_{k}$ denotes central and $\mathcal{X}^2_{k}(\mu)$ denotes non central chi squared distribution with $k$ degrees of freedom and non centrality parameter $\mu$.}

\section{System Model} 
\label{sec:smod}
\subsection{Sparse Signal Model}
For the change detection problem with sparse signals, we consider the vector observation model,
\begin{eqnarray}
\vec{y}[t] &=& \left\{ \begin{array}{ll} \vec{n}[t] & 0 \leq t < \nu, \\ \vec{A} \vec{x}[t] + \vec{n}[t] & t \geq \nu. \end{array} \right. \label{eq:model}
\end{eqnarray}
Here, $\vec{n}[t] \in \Complex^{M \times 1}$ denotes the complex additive white Gaussian noise (AWGN) with pdf $\mathrm{CN}(\vec{0},\sigma_n^2 \vec{I})$, $\vec{A} \in \Complex^{M \times N}$ denotes the sensing matrix (with $M \leq N$) and $\vec{x}[t]$ denotes the sparse signal with the number of non-zero entries $\norm{\vec{x}[t]}_0 = K \ll N$. We refer to $K$ as the sparsity level of the signal $\vec{x}[t]$. We consider the case where the support (i.e. locations of non-zero entries) of $\vec{x}[t]$ remains the same for all 
$t \geq \nu$. Let $\mc{S}$ denote the ordered support set, containing the locations of non-zero entries of $\vec{x}[t]$. Note that  $\vec{x}_\mc{S}[t]$ is of size $K$ and contains the non-zero entries of $\vec{x}[t]$. After the change, $\vec{y}[t]$ can be restated as
\begin{eqnarray}
\label{eq:y_in_terms_supp}
\vec{y}[t]=\vec{A}_\mc{S}\vec{x}_\mc{S}[t]+\vec{n}[t]= \sum_{i \in \mc{S}} \vec{a}_ix_i[t]+\vec{n}[t], \hspace{2mm} \forall t\geq \nu.
\end{eqnarray} 
We are interested in detecting the change and finding the support of the sparse signal $\vec{x}[t]$, once the change is declared.  
To proceed further, we assume that the pdf of $\vec{x}_\mc{S}[t]$ is $\mathrm{CN}(\vec{0},\vec{C_x})$ and the signal  covariance matrix 
$\vec{C_x} = \diag([\sigma_1^2,\cdots,\sigma_K^2]))$ is diagonal. 
We also assume that the non-zero entries of $\vec{x}[t]$ are independent across time $t$. 
For the model in \eqref{model}, we assume that the noise variance $\sigma_n^2$ is known. Mutual coherence of the sensing matrix, defined as
\begin{equation}
\alpha = \max_{1\leq k \neq \ell \leq N} \frac{|\langle \vec{a}_k,\vec{a}_\ell \rangle|}{\norm{\vec{a}_k} \norm{\vec{a}_\ell}}, \label{eq:alp}
\end{equation}
%Mutual coherence of the sensing matrix 
plays an important role in the performance of sparse signal recovery algorithms \cite{Tropp:omp}. In general, 
smaller the value of $\alpha$, better will be the sparse signal recovery performance. 

%The change detection algorithms and their performance vary greatly depending on whether the support set $\mc{S}$ is known or unknown, sparsity level $K$ is known or unknown and  the signal covariance $\vec{C_x}$ is known or unknown. We consider all the combinations of these cases in \secref{cdet}. 
\subsubsection{Change Detection Algorithms}
For the special case when the sensing matrix $\vec{A} = \vec{I}$ in \eqref{model}, the change detection problem is addressed in \cite{Fellouris:ICASSP:17}. On the other hand, we consider the general case, which allows \emph{compressive} measurements ($M \ll N$) on the 
sparse signal. The change detection algorithms and their performance vary greatly depending on whether the three parameters, namely,  
support set $\mc{S}$, sparsity level $K$ and signal covariance $\vec{C_x}$, are known or unknown. We consider all the combinations regarding the knowledge of these three parameters and develop corresponding change detection algorithms. When the signal variance is unknown, we assume that
the lower $\sigma_{\min}^2$ and the upper $\sigma_{\max}^2$ bounds are available such that $\sigma_{\min}^2 \leq \sigma_i^2 \leq 
\sigma_{\max}^2$, for $1 \leq i \leq K$. For change detection, we use CUSUM algorithm with different decision statistics such as 
energy $\|\vec{y}[t]\|_2^2$, correlation values $\vec{g}[t] = \vec{A}^* \vec{y}[t]$ and the maximum of these correlations $\|\vec{g}[t]\|_\infty^2$. 
When any or all of the three parameters $\mc{S}$, $K$ and $\vec{C_x}$ are unknown, we have the following two approaches:
\begin{itemize}
\item In the \emph{pdf-approximation} based approach, we approximate the post-change pdf based on the known parameters such as $\alpha$, 
$\sigma_{\min}^2$ and $\sigma_{\max}^2$ and use this approximate pdf for LLR computations. Here, we use the philosophy that the change detection
will be most difficult when the post-change pdf is closest (in terms of Kullback-Leibler (KL) distance)  to the pre-change pdf and hence try to obtain %approximate
the \emph{worst-case} post-change pdf. We also ensure that some of our post-change pdf approximations are exact when the sensing matrix has some additional structure (such as when it is unitary). 
\item In the \emph{parameter-estimation} based approach, we parameterize the post-change pdf with an unknown parameter and try to adaptively 
estimate this parameter using a stochastic gradient descent method \cite{kiefer,Singamasetty:ICASSP:17}. 
LLR computations are done using the estimated parameter value in the parameterized post-change pdf.
\end{itemize}
The various change detection algorithms are presented in \secref{cdet}.
 
\subsubsection{Sensing Matrix Design}
From the theory of compressive sensing \cite{Candes:SPM:08}, sensing matrices with small mutual coherence are better suited for sparse signal
recovery \cite{Li:TIT:14}. % since they satisfy the Restrictive Isometry Property (RIP) \cite{Li:TIT:14}. 
Towards this, we design sensing matrices with small coherence using designs from quantum information theory \cite{Klappenecker:ISIT:05}.
Specifically, we use 
symmetric informationally complete positive operator valued measure (SIC POVM) from the theory of equi-angular lines 
\cite{Scott:JMP:10,Grassl:JMP:17}, mutually unbiased bases (MUB) \cite{Klappenecker:FQ7:04} and approximate MUBs \cite{Shparlinski:TI:06} 
from quantum information theory. In addition to having low mutual coherence, we also show that SIC POVM based sensing matrix
has an additional structure, using which,  exact computation of the post-change pdfs of some decision statistics is possible, even when the 
support set $\mc{S}$ of the sparse signal is unknown.  The sensing matrix design problem is addressed in \secref{sdes}. 

\subsection{Applications of the Model}
In this section, we discuss some of the applications of our change detection model in \eqref{model}.

\subsubsection{Random Access in Direct Sequence-Code Division Multiple Access (DS-CDMA)}

Consider the synchronous DS-CDMA system with a codebook $\{\vec{b}_i : 1\leq i\leq N\}$, where the  codes $\vec{b}_i$ are $M\times 1$ vectors. 
Suppose $Q$ users indexed by a \emph{known} set $\mc{Q}$ are currently active and  at 
time $t = \nu$, $K$ new users indexed by an \emph{unknown} set $\mc{S}$ become active. 
The corresponding observation model is,
\begin{eqnarray}
\vec{r}[t] &=& \left\{ \begin{array}{ll} \vec{B}_\mc{Q} \vec{x}_\mc{Q}[t] + \vec{n}[t] & 0 \leq t < \nu, \\ 
\vec{B}_\mc{Q} \vec{x}_\mc{Q}[t]+\vec{B}_\mc{S} \vec{x}_\mc{S}[t] + \vec{n}[t] & t \geq \nu. \end{array} \right. \label{eq:dsmodel}
\end{eqnarray}
%Here $\vec{B}_\mc{Q}$ and $\vec{B}_\mc{S}$ denote the matrices containing the codes from the codebook indexed by sets $\mc{Q}$ and $\mc{S}$ respectively. 
Each entry in $\vec{x}_\mc{Q}$ (and $\vec{x}_\mc{S}$) is the product of the (flat) fading channel gain and the constellation symbol sent by the corresponding user in the index sets $\mc{Q}$ (and $\mc{S}$). The goal is to detect the change and identify the 
new users entering into the system. The authors in \cite{Oskiper:TIT:02} consider the above model \eqref{dsmodel} for the special case of $K=1$. We allow $K>1$, that is, more than one user can enter the system at a given time $\nu$.  

If the information related to already active users $\vec{x}_\mc{Q}[t]$ is known (from detection/estimation), then it can be simply subtracted out from the received signal
as $\vec{y}[t] = \vec{r}[t] - \vec{B}_\mc{Q} \vec{x}_\mc{Q}[t]$, resulting in the model \eqref{model}. On the other hand, if $\vec{x}_\mc{Q}[t]$ is not perfectly known, we can project the observation $\vec{r}[t]$ onto the orthogonal complement of $\vec{B}_\mc{Q}$ as $\vec{y}[t]= \vec{P}_\mc{Q}^{\perp} \vec{r}[t]$ with $\vec{P}_\mc{Q}^{\perp}=\vec{I}- \vec{B}_\mc{Q}(\vec{B}^*_\mc{Q} 
\vec{B}_\mc{Q})^{-1}\vec{B}_\mc{Q}^*$. With this projection, the \emph{effective} sensing matrix becomes, $\vec{A}_\mc{S} = \vec{P}_\mc{Q}^{\perp} \vec{B}_\mc{S}$, resulting in the model specified in \eqref{model}.

\subsubsection{Localized Change Detection in Sensor Networks} Consider a wireless sensor network which has $N$ sensor nodes and a fusion center.
To convey their identity to the fusion center and enable  transmission at the same time, each of the $N$ sensors is 
assigned a unique $M$-length code, $\lbrace \vec{a}_i: i=1,\dots,N \rbrace$.
All the sensors are initially in the OFF state, so that the observation at the fusion center is purely noise. 
When a change / event occurs at time $t = \nu$, a subset of sensors $\mc{S}$ get affected 
by that event and enter into the ON state.  The sensors in the ON state send their information symbol multiplexed with their corresponding code. 
The observation at the fusion center after the change is $\vec{y}[t]=\sum_{i\in \mc{S}} \vec{a}_i x_i[t] + \vec{n}[t]$, with
$x_i[t]$ being the product of the channel between the $i$-th sensor to the fusion center and the information symbol 
sent by that sensor at time $t$. This resembles the model in \eqref{y_in_terms_supp}. Here, recovering the support set $\mc{S}$ 
reveals the identities of the affected sensors, which in turn can reveal information on the location of the event in the network. 

\subsubsection{User Activity Detection in Massive Random Access} Massive random access systems \cite{Boljanović:SPAWC:17} with applications
in Internet of things (IoT), consist of a single receiving station and $N$ number of users, with $N$ being quite large. 
Each user is assigned an $M$-length identification code $\lbrace \vec{a}_i: 1 \leq i \leq N \rbrace$, which is known to the receiver. 
Initially, there is no active transmission. At some point in time, a small group of users indexed by the set $\mc{S}$ become active and
send their transmission using the codes assigned to them, resulting in an observation model specified in \eqref{y_in_terms_supp}.

\subsection{Main Contributions}

Some of the main contributions of our work are:

1) We develop change detection algorithms using the compressive measurements on the sparse signal and compare their performance in 
terms of worst-case detection delay versus average run length.

2) We develop an \emph{aggregate} CUSUM algorithm using the entire correlation vector $\vec{A}^* \vec{y}[t]$, which performs better 
than energy $\|\vec{y}[t]\|_2^2$ based detection and maximum correlation $\|\vec{A}^* \vec{y}[t]\|_\infty^2$ based detection, in most of the scenarios.

3) We show that energy $\|\vec{y}[t]\|_2^2$  based detection works better than maximum correlation $\|\vec{A}^* \vec{y}[t]\|_\infty^2$ based detection in the low SNR regime. On the other hand, when SNR is high, we show that correlation based detection performs better than energy based detection.

4) We show that quantum information theory based (deterministic) sensing matrices perform better than randomly generated matrices with i.i.d. Gaussian or 
Bernoulli distributed entries. Among the deterministic matrices, we show that SIC POVM yields the best detection performance when compared to MUB and approximate MUB based constructions.

5) We consider an application of our algorithms in massive random access and show that SIC POVM based codes have better detection performance when compared to
Gold codes.

\section{Change Detection Algorithms} \label{sec:cdet}
In this section, we develop change detection algorithms for the model considered in \eqref{model}. The statistics of the post-change observations
$\{\vec{y}[t],~t\geq \nu\}$ depend on the parameters such as support set $\mc{S}$, sparsity level $K$ and the signal covariance matrix $\vec{C_x}$. The change detection mechanism depends on whether these parameters are known or unknown. 
We address all the possible cases in this section.

\subsection{Both Support and Signal Variance Known}
For this case, both support set $\mc{S}$ and $\vec{C_x}$ are assumed to be known. We always have the pre-change pdf  of $\vec{y}[t]$  as 
$f_0^Y = \mathrm{CN}(\vec{0},\sigma_n^2 \vec{I})$.  When the support is known, the pdf of $\vec{y}[t]$  after the change is also perfectly known,
\begin{eqnarray}
\label{eq:Y_post}
f_1^Y &=& \mathrm{CN}(\vec{0}, \vec{A}_\mc{S} \vec{C_x} \vec{A}_\mc{S}^* +  \sigma_n^2 \vec{I}).
\end{eqnarray}

This resembles the standard change detection problem, for which the CUSUM algorithm is asymptotically optimal. It is described below 
for completeness. Computing the LLR as $L^Y(\vec{y}[t]) = \log \frac{f_1^Y(\vec{y}[t])}{f_0^Y(\vec{y}[t])}$, the CUSUM metric at each time $t$ is
\begin{eqnarray*}
W^Y[t] = \left (W^Y[t-1] + L^Y(\vec{y}[t]) \right )^+,
\end{eqnarray*}
with the initialization $W^Y[-1] = 0$. We use the metric $W^Y[t]$ to detect change based on the rule specified in \eqref{cusum}.
We refer to this method as \emph{Ideal}-CUSUM since the support is known perfectly in advance.

In some situations, we find it useful to implement the change detection algorithms using the vector of inner products $\vec{g}[t]$, which is defined as
\begin{equation}
\vec{g}[t] \triangleq \vec{A}^* \vec{y}[t]. \label{eq:gvec}
\end{equation}
We note that, when $\vec{A}$ is full rank, $\vec{g}[t]$ serves as a sufficient statistic for detection since $\vec{y}[t]$ can be obtained from $\vec{g}[t]$ as $\vec{y}[t] = (\vec{A} \vec{A}^*)^{-1} \vec{A} \vec{g}[t]$. The pre-change and post-change pdf of $\vec{g}[t]$ is given as
\begin{equation*}
\begin{split}
f_0^G &= \mathrm{CN}(\vec{0},\sigma_n^2 \vec{A}^*\vec{A}), \\
f_1^G &= \mathrm{CN}(\vec{0},  \vec{A}^* \vec{A}_\mc{S} \vec{C_x} \vec{A}_\mc{S}^* \vec{A} +  \sigma_n^2 \vec{A}^*\vec{A}). 
\end{split}
\end{equation*}
We compute the LLR as $L^G(\vec{g}[t]) = \log \frac{f_1^G(\vec{g}[t])}{f_0^G(\vec{g}[t])}$ 
and the CUSUM metric as $W^G[t] = \left (W^G[t-1] + L^G(\vec{g}[t]) \right )^+$. We can implement the CUSUM rule as given in \eqref{cusum}.

Consider the special case when sensing matrix $\vec{A}$ is unitary. In this case, 
\begin{eqnarray}
\label{eq:unitg}
\vec{g}[t] &=& \left\{ \begin{array}{ll} \tvec{n}[t] & 0 \leq t < \nu, \\ \vec{x}[t] + \tvec{n}[t] & t \geq \nu, \end{array} \right. 
\end{eqnarray}
where $\tvec{n}[t] = \vec{A}^* \vec{n}[t]$ and $\tvec{n}[t] \sim \mathrm{CN}(\vec{0},\sigma_n^2 \vec{I})$. Notice that whenever the index 
$i \notin \mc{S}$, the $i^{th}$ entry $g_i[t]$ in $\vec{g}[t]$,  is purely noise before and after the change point. Thus, we have $f_0^{G_i} = \mathrm{CN}(0,\sigma_n^2)$ and  $f_1^{G_i} = \mathrm{CN}(0,\sigma_n^2+\sigma_i^2)$, whenever $i \in \mc{S}$. Here, $\sigma_i^2$ denotes the variance of the $i^{th}$ entry in $\vec{x}_\mc{S}[t]$. Hence, the LLR $L^G[t]$ simplifies to, 
\begin{equation}
L^G(\vec{g}[t]) = \sum_{i \in \mc{S}} \log \frac{f_1^{G_i}(g_i[t])}{f_0^{G_i}(g_i[t])}. \label{eq:llr_sum}
\end{equation}
Hence, for a unitary sensing matrix, the LLR computation using $\vec{g}[t]$ boils down to summing the individual LLRs of only those entries in $\vec{g}[t]$ which belong to the support set $\mc{S}$. This observation will be useful in designing algorithms when the support set is unknown.

\subsection{Both Signal Variance and Sparsity Level Known, Support Unknown}
\label{sec:Unknown_Support_Case}
For the model in \eqref{model}, we consider the case where the support set of $\vec{x}[t]$ is unknown. However, we 
assume that the sparsity level $K$ and the signal covariance matrix $\vec{C_x}$ are known. In addition, we assume that the 
covariance matrix is of the form $\vec{C_x} = \sigma_x^2 \vec{I}$, i.e., all the non-zero entries are i.i.d. 
Under these assumptions, we develop the (asymptotically) optimal CUSUM and other sub-optimal detection techniques in this section.

\subsubsection{Optimal CUSUM}
When the support $\mc{S}$ of $\vec{x}[t]$ is unknown, the covariance matrix  
of the post-change observations in \eqref{Y_post} is unknown. Thus, we can treat the support $\mc{S}$ as an unknown parameter 
in the post-change pdf. Since there are only finite number of possibilities for the support, which is $\binom{N}{K}$, 
we run the CUSUM algorithm  simultaneously for each possible support. The exact details of the algorithm are given below. 

Let $\mathfrak{S}$ denote the set of all the subsets of $\{1,2,\cdots,N\}$ which have cardinality equal to $K$. Thus, the true support $\mc{S}$ is also one of the 
entries in $\mathfrak{S}$. Consider a candidate entry $\mc{\hat{S}}$ in $\mathfrak{S}$. 
For the candidate support set $\mc{\hat{S}}$, the associated post-change pdf of the observation is $f_1^{Y,\mc{\hat{S}}} = \mathrm{CN}(\vec{0},\sigma_x^2 \vec{A}_\mc{\hat{S}} \vec{A}_\mc{\hat{S}}^* + \sigma_n^2 \vec{I})$. Thus, we compute the LLR for each of the candidate support  $\mc{\hat{S}}$ 
as
\begin{equation*}
L^{Y,\mc{\hat{S}}}(\vec{y}[t]) = \log \frac{f_1^{Y,\mc{\hat{S}}}(\vec{y}[t])}{f_0^Y(\vec{y}[t])}.
\end{equation*}
It can be easily verified that, for $t \geq \nu$, we have 
$\E_\nu\{L^{Y,\mc{{S}}}(\vec{y}[t])\} > \E_\nu\{L^{Y,\mc{\hat{S}}}(\vec{y}[t])\}$, for any $\mc{\hat{S}} \in \mathfrak{S} \setminus \mc{S}$.
Hence, after the change, the expected value of LLR will be the highest when the candidate support set $\hat{\mc{S}}$ is identical to the true support $\mc{S}$. 
Since, we do not know the true support, we run CUSUM for each of the candidate support set and make a decision based on the CUSUM metric which has the largest magnitude. Specifically, for each $\mc{\hat{S}} \in \mathfrak{S}$, we compute the CUSUM metric as
\begin{equation*}
W^{Y,\mc{\hat{S}}}[t] = \left (W^{Y,\mc{\hat{S}}}[t-1] + L^{Y,\mc{\hat{S}}}(\vec{y}[t]) \right )^+,
\end{equation*} 
with the initialization $W^{Y,\mc{\hat{S}}}[-1] = 0$. The CUSUM change detection rule $\mc{R}$ is given as
\begin{equation}
\mc{R} = \left \{\begin{array}{ll} \text{Declare change at time $t$} & \text{if}~ \underset{\mc{\hat{S}} \in \mathfrak{S}} \max \hspace{2mm} W^{Y,\mc{\hat{S}}}[t] > \tau, \\ 
\text{Continue} & \text{otherwise.} \end{array} \right. \label{eq:ocusum}
\end{equation}
In \cite{Lorden:AMS:71}, Lorden also considered the case when the post-change pdf could be any one from a finite set of pdfs. He established
that running CUSUM separately for each possible post-change pdf and making the decision based on the maximum of these CUSUM metrics, as done in \eqref{ocusum}, is asymptotically optimal, as the average run length constraint approaches infinity. Though optimal, for large values of $N$, running $\binom{N}{K}$ separate CUSUMs can be prohibitively complex.

We now develop some sub-optimal detection techniques, which are described below.

\subsubsection{Aggregate CUSUM}
\label{sec:G}
We develop an algorithm, which we refer to as Aggregate CUSUM, which uses the vector of inner-products $\vec{g}[t]$, defined in \eqref{gvec}. 
To get some insight, we start by considering the case when $\vec{A}$ is unitary. In that case, $\vec{g}[t]$ is given by  \eqref{unitg} and the LLR computation \eqref{llr_sum} in Ideal-CUSUM  is equivalent to summing the LLRs of individual entries of $\vec{g}[t]$ corresponding to the non-zero locations. Since we do not know the support, we compute LLR for each entry $g_i[t]$ in $\vec{g}[t]$, assuming that $i$ belongs to the support  $\mc{S}$. Specifically, with $f_0^{G_i} = \mathrm{CN}(0,\sigma_n^2)$ and $f_1^{G_i} =  \mathrm{CN}(0, \sigma_n^2+\sigma_x^2)$, we compute the LLR for $i^{th}$ entry as
\begin{eqnarray}
 \label{eq:llr_par}
L^{G_i}(g_i[t]) = \log \frac{f_1^{G_i}(g_i[t])}{f_0^{G_i}(g_i[t])}.
\end{eqnarray} 
We compute CUSUM metric for each entry $g_i[t]$ parallelly as 
\begin{equation*}
W^{G_i}[t] = \left (W^{G_i}[t-1] + L^{G_i}(g_i[t]) \right )^+,
\end{equation*} 
with $W^{G_i}[-1]=0$.
Again, we can easily verify that, after the change ($t \geq \nu$), for any $i \in \mc{S}$ and any $\ell \notin \mc{S}$, we have
$\E_\nu\{L^{G_i}[t]\} > 0$ and $ \E_\nu\{L^{G_\ell}[t]\} < 0$. Hence, after the change, LLR in \eqref{llr_par} tends to be larger when the entry belongs to the true support. This implies that the CUSUM metrics corresponding to the non-zero locations tend to be higher after the change. This indirectly provides a way of identifying the unknown support $\mc{S}$ of $\vec{x}[t]$. In order to detect the change, we sum the $K$-largest CUSUM metrics at each time $t$ and compare it with a threshold. We declare change at time $t$ based on the following decision rule, 
\begin{eqnarray}
\mc{R} = \left \{\begin{array}{ll} \text{Declare change} & \text{if}~  \hspace{2 mm} \sum\limits_{\substack{i=0}}^{K-1} W^{G_i}_{(N-i)}[t] > \tau, \\ 
\text{Continue} & \text{otherwise,} \end{array} \right. \label{eq:pcusum}
\end{eqnarray}  
where $W^{G_i}_{(m)}[t]$ denotes the $m^{th}$ ordered statistic of the set  $\lbrace W^{G_i}[t]:1\leq i \leq N \rbrace$, with $W^{G_i}_{(N)}[t]$
being the largest. Using Aggregate CUSUM, we can also estimate the support by picking those  $K$ locations that correspond to the $K$- largest values of $W^{G_i}[T]$ where $T$ is the time at which change is declared. When $\vec{A}=\vec{I}$, the Aggregate CUSUM algorithm described 
above has been studied in \cite{Fellouris:ICASSP:17} and it was shown to have asymptotic optimality properties under some specific conditions. 

We extend the Aggregate CUSUM algorithm for a non-unitary sensing matrix $\vec{A}$ as follows. In this case, the  
the post-change pdf for each entry $f_1^{G_i}$ is not known perfectly and hence we use approximations for the post-change pdf. 
We need these approximations to be near-exact in order to 
ensure good detection performance. Towards getting the approximate pdf, we use the mutual coherence $\alpha$ of the sensing matrix $\vec{A}$ 
defined in \eqref{alp}. Based on the derivation in \appref{G_sigvar}, we set the approximate post-change pdf as
\begin{eqnarray}
\label{eq:G_post}
\tilde{f}_1^{G_i} =   \begin{cases}
    \mathrm{CN}(0,\sigma_n^2+ K\alpha^2 \sigma_x^2), & i \notin \mc{S}, \\
    \mathrm{CN}(0,\sigma_n^2 + K\alpha^2 \sigma_x^2 + (1-\alpha^2)\sigma_x^2), & i \in \mc{S}.
  \end{cases}
\end{eqnarray} 
Once we get the approximating pdf $\tilde{f}_1^{G_i}$, we proceed in the same way as before, by replacing $f_1^{G_i}$ with $\tilde{f}_1^{G_i}$ for ${i \in \mc{S}}$ in \eqref{llr_par}.

\subsubsection{Energy CUSUM}
\label{sec:E}
In this section, we describe another suboptimal technique, which uses energy of the received signal as the decision statistic. 
Energy detector has been used previously to detect sparse signals in \cite{Haupt:ICASSP:07,Wang:ICASSP:08}.
Let us define the energy of the observation vector to be $e[t] = \norm{\vec{y}[t]}_2^2$. Before the change, $e[t]$ is the sum of squares of $2M$ i.i.d zero mean Gaussian random variables of variance $\sigma_n^2$ and follows $\chi^2$ distribution with $2M$ degrees of freedom. This implies $f_0^E=\chi^2_{2M}$. However, for sufficiently large values of $M$, the $\chi^2$ distribution of $e[t]$ can be approximated as Gaussian with 
appropriate mean and variance, using the central limit theorem (CLT). The approximate pre-change distribution is $\tilde{f}_0^E = \mathrm{N}(M \sigma_n^2, M \sigma_n^4)$. Based on the derivation in \appref{E_sigvar}, we also obtain the approximate post-change pdf as
\begin{equation}
\label{eq:E_post}
\begin{split}
\tilde{f}_1^E =& \mathrm{N}(\mu_{E},\sigma^2_{E}) \hspace{2mm}\text{where,} \hspace{1mm}\mu_{E} = K\sigma_{x}^2+ M\sigma_n^2 \hspace{4mm} \text{and}\\
\sigma^2_{E}=& K\phi_{\min}^2 +2 \sigma_n^2  K\phi_{\min} +M(\sigma_n^2)^2.
\end{split}
\end{equation}
Here, $\phi_{\min}=\max \big\{0,\sigma_{x}^2 ( 1 - (K-1) \alpha) \big)\}$. Using these pdf approximations, we run the CUSUM algorithm for energy function $e[t]$. Note that, this method does not use the fact that support of all the signals $\vec{x}[t]$ remains same after the change.

\subsubsection{Correlator CUSUM}
\label{sec:C}
We now describe the matched-filter/ correlator based metric as the decision statistic. Specifically, considering the 
vector of inner products $\vec{g}[t] = \vec{A}^* \vec{y}[t]$, we use the maximum inner product (correlation value) 
$c[t] = \norm{\vec{g}[t]}^2_\infty$ as the decision statistic. The correlator based statistic has been previously used for detection of sparse signals in \cite{Duarte:ICASSP:06}. 
The pre-change pdf is the distribution of maximum of $N$ i.i.d. exponential random variables, 
$f_0^C = N (1-e^{-\lambda_n c[t]})^{N-1}\lambda_n e^{-\lambda_n c[t]}$, where $\lambda_n=\frac{1}{\sigma_n^2}$.
Based on the derivations in \appref{G_sigvar}, we get the approximate post-change pdf of $c[t]$ as
\begin{equation}
\label{eq:C_post}
\begin{split}
\tilde{f}_1^C =  K(1-e^{-\lambda_{\mc{S}}c[t]})^{K-1}  \lambda_{\mc{S}}e^{-\lambda_{\mc{S}}c[t]} (1-e^{-\lambda_{0}c[t]})^{N-K}\\ +  (N-K) (1-e^{-\lambda_{0}c[t]})^{N-K-1} \lambda_{0}e^{-\lambda_{0}c[t]}  (1-e^{-\lambda_{\mc{S}}c[t]})^{K},
\end{split}
\end{equation}
where $\lambda_0=\frac{1}{\sigma_n^2+ K\alpha^2 \sigma_x^2}$ and $\lambda_{\mc{S}}=\frac{1}{\sigma_n^2+ K\alpha^2 \sigma_x^2+ (1-\alpha^2)\sigma_x^2}$.\\[1.2ex]
Correlator CUSUM also does not use the fact that the unknown sparse signal $\vec{x}[t]$ has the same support for all $t \geq \nu$.

Correlator and Energy CUSUM do not provide a direct mechanism to identity the support set $\mc{S}$. Hence, at the 
time (say $T$) when the change is declared by Energy (or Correlator) CUSUM, we run a sparse signal recovery algorithm such as orthogonal matching pursuit (OMP) \cite{Tropp:TIT:07} on the observation $\vec{y}[T]$ and identify the support.

\subsubsection{Partial Support Estimation (PSE) CUSUM}
\label{sec:PSE}
A technique to detect sparse signals using a partial estimate of the support is presented in \cite{Wimalajeewa:TSIPN:17}. We combine this detection technique with CUSUM algorithm and employ the same for our change detection problem. Sparse signal recovery algorithms, like OMP, can be employed to obtain a partial estimate $\hat{\mc{S}}_p$  of support 
having cardinality $|\hat{\mc{S}}_p|=K_p$, where $1 \leq K_p \leq K$.  The sensing matrix in \cite{Wimalajeewa:TSIPN:17} is chosen to satisfy the condition 
$\vec{A}\vec{A}^*=\vec{I}_M$, i.e., its rows are orthonormal. The decision statistic considered here is the total power of the received signal $\vec{y}[t]$ projected on to the subspace spanned by the partial support estimate $\hat{\mc{S}}_p$ and normalized by the noise variance $\sigma_n^2$.
Specifically, the decision statistic is $p[t]=\frac{\norm{\vec{P}_{\vec{\hat{\mc{S}}}_p}\vec{y}[t]}_2^2}{\sigma_n^2}$, where the projection matrix is given by $\vec{P}_{\hat{\mc{S}}_p}= \vec{A}_{\hat{\mc{S}}_p} \big(\vec{A}_{\hat{\mc{S}}_p}^T \vec{A}_{\hat{\mc{S}}_p} \big)^{-1} \vec{A}_{\hat{\mc{S}}_p}^T$.
The pre-change distribution of $p[t]$ is $\chi^2_{K_p}$ which can be approximated as
$\tilde{f}^P_0=\mathrm{N}(K_p,2K_p)$, using CLT.
From \cite{Wimalajeewa:TSIPN:17}, the post-change distribution of $p[t]$ is $\chi^2_{K_p}(\mu_{\hat{\mc{S}}_p})$ with approximate non centrality parameter as $\tilde{\mu}_{\hat{\mc{S}}_p}=\frac{\E{\norm{\vec{P}_{\hat{\mc{S}}_p}\vec{A} \vec{x}[t]}}_2^2}{\sigma_n^2}=\frac{M K_p}{NK}\Big(1+ \frac{K- K_p}{M} \Big)\frac{\E\norm{\vec{x}[t]}_2^2}{\sigma_n^2}$. Here, $\E\norm{\vec{x}[t]}_2^2$ can be replaced by $K\sigma_x^2 $ when $\vec{C_x}=\sigma_x^2\vec{I}$. Thus, using a Gaussian approximation due to CLT, the post-change pdf is
$\tilde{f}^P_1 = \mathrm{N}\big(K_p+ \tilde{\mu}_{\hat{\mc{S}}_p}, 2(K_p + 2\tilde{\mu}_{\hat{\mc{S}}_p} )\big)$. The performance of PSE-CUSUM  relies heavily on the accuracy of the estimated partial support. 

\subsection{Sparsity Level Known, Both Support and Signal Variance Unknown}
\label{sec:unknown_sig_var}
In this section, we consider the case when both $\mc{S}$ and $\vec{C_x}$ are unknown and the signal covariance matrix can take the form $\vec{C_x}= \diag([\sigma_1^2,\cdots,\sigma_K^2])$. 
We assume that the sparsity level $K$ is known. Also, we assume the knowledge of the upper bound $\sigma_{\max}^2$ and the lower bound $\sigma_{\min}^2$ on the signal variances, such that $\sigma_{\min}^2 \leq \sigma_i^2 \leq \sigma_{\max}^2 \hspace{1mm} \forall i \in \mc{S}$. 
\subsubsection{Based on pdf-approximation}
We use this approach for Aggregate, Energy and Correlator CUSUM algorithms when support and signal variance is unknown. The only difference from the previous case is that the approximations for the 
post change pdfs of the decision statistics are obtained in terms of $\sigma_{\min}^2$ and $\sigma_{\max}^2$. 
We obtain these approximations based on the post-change pdf which gives the lowest KL distance from the pre-change pdf, in order to account for the worst case detection delay.

Using the derivations in \appref{G_sigvar}, the post-change distribution for Aggregate CUSUM in \eqref{G_post} is replaced with 
the approximation, 
\begin{eqnarray}
\label{eq:Gi_approx}
\tilde{f}_1^{G_{i}} =   \begin{cases}
    \mathrm{CN}(0,\sigma_n^2+ K\alpha^2 \sigma_{\min}^2), & i \notin \mc{S}, \\
    \mathrm{CN}(0,\sigma_n^2 +K\alpha^2 \sigma_{\min}^2+ (1-\alpha^2)\sigma_{\min}^2), & i \in \mc{S}.
  \end{cases}
\end{eqnarray} 

Using the derivations in \appref{E_sigvar}, for the Energy CUSUM metric $e[t]= \norm{\vec{y}[t]}^2_2$, 
the post-change distribution is $\tilde{f}_1^{E} = \mathrm{N} \big(\tilde{\mu}_{E},\tilde{\sigma}^{2}_{E} \big)$ and mean and variance of $e[t]$ are approximated as
\begin{eqnarray}
\label{eq:E_mean_approx}
\tilde{\mu}_E &=& K\phi_{\min}+ M\sigma_n^2, \\
\label{eq:E_var_approx}
\tilde{\sigma}^{2}_E &=& K\phi_{\min}^2 +2 \sigma_n^2  K\phi_{\min} +M(\sigma_n^2)^2,
\end{eqnarray}
where $\phi_{\min}=\max \big\{ 0,\sigma_{\min}^2(1-\alpha (K-1))\big\}$.

From the derivations in  \appref{G_sigvar}, for the Correlator CUSUM metric $c[t] =\norm{\vec{g}[t]}^2_\infty$, the post-change pdf $\tilde{f}_1^{C}$ is same as that in \eqref{C_post}, but with the parameters $\lambda_0$ and $\lambda_{\mc{S}}$ replaced with their approximations, 
\begin{eqnarray}
\label{eq:C0_approx}
\tilde{\lambda}_0 &=& \frac{1}{\sigma_n^2+ K\alpha^2 \sigma_{\min}^2},\\
\label{eq:Cs_approx}
\tilde{\lambda}_{\mc{S}} &=&\frac{1}{\sigma_n^2 + K\alpha^2 \sigma_{\min}^2+(1-\alpha^2) \sigma_{\min}^2}. 
\end{eqnarray}

\subsubsection{Based on parameter-estimation}
So far, we have followed the approach of approximating the post-change pdfs using bounds on the signal variances. 
In an alternative approach, we can adaptively estimate the unknown parameters \cite{Li:ICASSP:09,Singamasetty:ICASSP:17} in the post-change pdfs and compute the LLRs using these estimated parameters. One such approach, which we refer to as stochastic gradient decent (SGD) CUSUM, is described below.

Let $d[t]$ denote the decision statistic used for change detection and $d[t]=g_i[t], e[t]$ or $c[t]$ for Aggregate, Energy or Correlator CUSUM respectively.  Let $\theta$ be the unknown parameter in the post-change pdf. Let the actual value of $\theta$ be equal to $\bar{\theta}$ and 
let $\hat{\theta}$ be its estimate. Hence, 
$f_1^{D,\bar{\theta}}$ is the true post-change pdf. We define the LLR parameterized by $\theta$ as 
$L^{D,\theta}(d[t])=\log\frac{f_1^{D,\theta}(d[t])}{f_0^D(d[t])}$ and the corresponding CUSUM metric as $W^{D,\theta}[t]$. 
Let the regression function denoting the expected value of LLR at time instant $t$ be $V_t({\theta})=\E_{\nu}\{L^{D,{\theta}}(d[t])\}$.  Since expectation is taken w.r.t the true pdf of $d[t]$, it can be shown from \cite{Singamasetty:ICASSP:17} that, $V_t({\theta}) < 0$ for $t < \nu$ and 
$V_t({\theta})=\mathrm{D}_{KL}(f_1^{D,\bar{\theta}} \parallel f_0^{D} )-\mathrm{D}_{KL}(f_1^{D,\bar{\theta}} 
\parallel f_1^{D,{\theta}})$, for $t \geq \nu$. Here, $\mathrm{D}_{KL}(f_p\parallel f_q)$ is used to denote the KL distance from $f_q$ to $f_p$. The post-change ($t \geq \nu$) regression function $V_t(\theta)$ is maximized when ${\theta}=\bar{\theta}$, i.e., when the argument of the regression function is equal to the true value  of the parameter. 
This motivates a gradient descent based approach to estimate the unknown parameter, which is described below. 
 
%From \cite{Singamasetty:ICASSP:17}, 
At time $t$, the gradient of the regression function at the present estimate $\hat{\theta}[t]$ 
is $\frac{V_t(\hat{\theta}[t]+c)-V_t(\hat{\theta}[t]-c)}{2c}$, as the limit $c \rightarrow 0$. % is a small positive constant.
For SGD, using stochastic approximation principle \cite{kiefer,Singamasetty:ICASSP:17}, we replace the expectation (ensemble average) in $V_t(\theta)$ with an instantaneous approximation using the LLR from the actual values of $d[t]$ and $\hat{\theta}[t]$. Specifically, at time $t=0$, the estimate $\hat{\theta}[0]$ is initialized to zero.  With small positive constants $a$ and $c$, for $t \geq 0$, the estimate is updated as
\begin{eqnarray}
\label{eq:sgd_llr}
 \hat{\theta}[t+1]=\hat{\theta}[t]+a\frac{L^{D,\hat{\theta}[t]+c}(d[t])-L^{D,\hat{\theta}[t]-c}(d[t])}{c}.
\end{eqnarray}
%This update is a special case from \cite{kiefer,Singamasetty:ICASSP:17}, where the parameters $a$ and $c$ are kept fixed. In general, they can be changed with time $n$.
Initializing $W^{D,\hat{\theta}[t]}[-1]=0$, the CUSUM metric is then computed as 
\begin{eqnarray}
\label{eq:sgd_cusum}
W^{D,\hat{\theta}[t]}[t]=\Big(W^{D,\hat{\theta}[t-1]}[t-1]+L^{D,\hat{\theta}[t]}(d[t]) \Big)^+
\end{eqnarray}
where $L^{D,\hat{\theta}[t]}(d[t])=\log\frac{f_1^{D,\hat{\theta}[t]}(d[t])}{f_0^D(d[t])}$ and the algorithm terminates according to rule $\mc{R}$ in \eqref{cusum}. 

We use the Aggregate, Energy and Correlator decision statistics for SGD-CUSUM and the implementation with each statistic is described below. 
From \eqref{G_i} in \appref{G_sigvar}, we consider $\theta=\alpha^2\sigma_{\textrm{sum}}^2+(1-\alpha^2)\sigma^2_i$ to be the unknown parameter for Aggregate-SGD-CUSUM. The estimate $\hat{\theta}[0]$ is initialized to zero. The post-change pdf of $g_i[t]$ for $i \in \mc{S}$, parameterized by $\theta$ is given by $ \tilde{f}_{1}^{G_i,\theta} = \mathrm{CN}(0, \sigma_n^2  + \theta)$.

For Energy-SGD-CUSUM, $\theta=\phi_{\min}$ in \eqref{E_mean_approx} and \eqref{E_var_approx}, is treated as the unknown parameter which must be initialized to zero at the start of the algorithm. The approximate parameterized post-change pdf is $\tilde{f}^{E,\theta}_1 =
\mathrm{CN}(\mu_E^\theta,(\sigma_E^\theta)^2)$ with $\mu_E^\theta  = K{\theta} + M\sigma_n^2$ and  
$(\sigma_E^\theta)^2 = K{\theta}^2 + 2\sigma_n^{2}K \theta +  M\sigma_n^4$.

For Correlator-SGD-CUSUM, the post-change pdf is given by \eqref{C} in \appref{G_sigvar}. We consider $\theta=\sigma^2_i$ to be the unknown parameter in ${\lambda}_{\mc{S}}$, so that
\begin{eqnarray*}
{\lambda}_{\mc{S}}^\theta &=&\frac{1}{\sigma_n^2 + K\alpha^2 \sigma_{\min}^2+ (1-\alpha^2) \theta}
\end{eqnarray*}
and initialize $\hat{\theta}[0]=0$. The parameter ${\lambda}_{0}$ in \eqref{C0_approx} does not depend on $\theta$.

The LLR and CUSUM metric in SGD CUSUM for all the above decision statistics is updated using \eqref{sgd_llr} and \eqref{sgd_cusum} respectively.

\subsection{Support, Signal Variance and Sparsity level are Unknown}
\label{sec:unknown_sparsity}
One must observe that the post-change distributions enlisted in \secref{Unknown_Support_Case} and \secref{unknown_sig_var} depend on the knowledge of the exact value of sparsity order $K$ of $\vec{x}[t]$. In this section, we address the case when support set $\mc{S}$, signal covariance $\vec{C_x}$ and sparsity level $K$ are unknown. However, we assume that an upper bound on the sparsity level $K_{\max}$ is known, such that $K \leq K_{\max}$. 
\subsubsection{Based on pdf-approximation}
We consider the decision statistics $d[t]$ equal to $g_i[t], e[t]$ and $c[t]$ for change detection using Aggregate, Energy and Correlator CUSUM respectively. With a decision statistic $d[t]$ and unknown sparsity $k$, we run the CUSUM algorithm \emph{parallelly} for all values of  $\lbrace k:1 \leq k\leq K_{\max}\rbrace$ and declare change based on the largest CUSUM metric. We use the approximate post-change pdfs of various decision statistics listed in \eqref{Gi_approx}, \eqref{E_mean_approx}, \eqref{E_var_approx},  \eqref{C0_approx} and \eqref{Cs_approx} for running CUSUM for each value of $k$. The LLR for a particular value of $k$ at time $t$ for a decision statistic $d[t]$ is computed as $L^{D,k}(d[t])=\frac{f_1^{D,k}(d[t])}{f_0^D(d[t])}$. The CUSUM metric  is updated as $W^{D,k}[t]=\big(W^{D,k}[t-1]+L^{D,k}(d[t])\big)^+$, with $W^{D,k}[-1]=0$. 

The parallel CUSUM change detection rule $\mc{R}$ is given by,
\begin{equation}
\label{eq:unknown_sparsity}
\begin{split}
\mc{R} = \left \{\begin{array}{ll} \text{Declare change} & \text{if}~ \underset{k  \in \{1, \cdots, K_{\max}\} }{\max} \hspace{1mm} W^{D,k}[t] > \tau, \\ 
\text{Continue} & \text{otherwise.} \end{array} \right. 
\end{split}
\end{equation}

\subsubsection{Based on parameter-estimation} 
With unknown sparsity level, the implementation of Aggregate-SGD-CUSUM remains same as that described in the previous section.

For Energy-SGD-CUSUM algorithm, we treat $\theta=K\phi_{\min}$ as the unknown parameter in \eqref{E_mean_approx} and \eqref{E_var_approx} which must be initialized to zero at the start of the algorithm. The approximate moments of the parameterized post-change pdf 
$\tilde{f}^{E,\theta}_1 =\mathrm{CN}(\mu_E^\theta,(\sigma_E^\theta)^2)$, are given by 
$\mu_E^\theta  = {\theta} + M\sigma_n^2$ and $(\sigma_E^\theta)^2 = \sigma^2_{\min} {\theta} + 2\sigma_n^{2} {\theta} +  M\sigma_n^4$,
where the term $K\phi_{\min}^2$ in \eqref{E_var_approx} is approximated as $K\phi_{\min}^2 = \theta \phi_{\min}\approx \theta \sigma^2_{\min}$.

Since the post-change pdf of the correlator statistic in \eqref{C_post} depends implicitly on sparsity $K$ and cannot be isolated in the form of a separate parameter, Correlator-SGD-CUSUM cannot be implemented for this case.

\section{Sensing Matrix Design} \label{sec:sdes} 
In this section, we  present deterministic constructions of sensing matrices based on designs from quantum information theory
\cite{Klappenecker:ISIT:05}.  
In addition to low mutual coherence, one of these quantum theoretic constructions has an additional structure in the sensing matrix 
which allows exact computation of the post-change pdfs of some decision statistics.

\subsection{Unitary Matrix}
For a unitary sensing matrix, the approximations for the post change pdfs of decision statistics are obtained by setting 
mutual coherence $\alpha$ to be zero. However, these approximations are exact for some scenarios, as given below.  
\begin{lemma} \label{lem:unit} 
With unitary sensing matrix, when the signal covariance $\vec{C_x}$ and sparsity level $K$ are known but the support $\mc{S}$ is unknown, the exact post-change distributions of the decision statistics $g_i[t]$, $e[t]$ and $c[t]$ are obtained by substituting $\alpha=0$ in \eqref{G_post}, \eqref{E_post} and \eqref{C_post} respectively.
\end{lemma}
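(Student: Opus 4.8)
The plan is to exploit the fact that a unitary $\vec{A}$ collapses the correlation vector $\vec{g}[t]=\vec{A}^*\vec{y}[t]$ to the simple form in \eqref{unitg}, then read off the \emph{exact} post-change laws of $g_i[t]$, $c[t]=\norm{\vec{g}[t]}_\infty^2$ and $e[t]=\norm{\vec{y}[t]}_2^2$ and verify term-by-term that they coincide with \eqref{G_post}, \eqref{C_post} and \eqref{E_post} evaluated at $\alpha=0$. I would first record the consequences of unitarity: $\vec{A}^*\vec{A}=\vec{I}$ (hence $M=N$), and $\tvec{n}[t]=\vec{A}^*\vec{n}[t]\sim\mathrm{CN}(\vec{0},\sigma_n^2\vec{I})$ since a unitary map preserves a white circularly-symmetric complex Gaussian. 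Thus, for $t\geq\nu$, $g_i[t]=x_i[t]+\tilde n_i[t]$; with $\vec{C_x}=\sigma_x^2\vec{I}$ and the signal independent of the noise, $g_i[t]\sim\mathrm{CN}(0,\sigma_n^2)$ for $i\notin\mc{S}$ and $g_i[t]\sim\mathrm{CN}(0,\sigma_n^2+\sigma_x^2)$ for $i\in\mc{S}$, being a sum of two independent complex Gaussians. Putting $\alpha=0$ in \eqref{G_post} deletes the $K\alpha^2\sigma_x^2$ terms and turns $(1-\alpha^2)$ into $1$, reproducing exactly these two cases, which settles the claim for $g_i[t]$.

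For the correlator statistic, I would write $c[t]=\max_{1\leq i\leq N}|g_i[t]|^2$, where the variables $|g_i[t]|^2$ are mutually independent (the entries of $\tvec{n}[t]$ are independent, and the $x_i[t]$ are independent of each other and of the noise) and exponential with rate $1/\sigma_n^2$ for $i\notin\mc{S}$ and rate $1/(\sigma_n^2+\sigma_x^2)$ for $i\in\mc{S}$. The CDF of the maximum therefore factors as $(1-e^{-\lambda_{\mc{S}}c})^{K}(1-e^{-\lambda_{0}c})^{N-K}$ with $\lambda_0=1/\sigma_n^2$ and $\lambda_{\mc{S}}=1/(\sigma_n^2+\sigma_x^2)$; differentiating gives precisely the density in \eqref{C_post}, and substituting $\alpha=0$ in the definitions of $\lambda_0,\lambda_{\mc{S}}$ stated just below \eqref{C_post} yields exactly these rates. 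For the energy, unitarity gives $e[t]=\norm{\vec{g}[t]}_2^2=\sum_{i\in\mc{S}}|g_i[t]|^2+\sum_{i\notin\mc{S}}|g_i[t]|^2$, a sum of $K$ independent exponentials of mean $\sigma_n^2+\sigma_x^2$ and $M-K$ of mean $\sigma_n^2$; its exact mean is $K\sigma_x^2+M\sigma_n^2$ and its exact variance is $K(\sigma_n^2+\sigma_x^2)^2+(M-K)\sigma_n^4=K\sigma_x^4+2K\sigma_n^2\sigma_x^2+M\sigma_n^4$. Setting $\alpha=0$ in \eqref{E_post} makes $\phi_{\min}=\sigma_x^2$, so $\mu_E$ and $\sigma_E^2$ there collapse to exactly these values (the Gaussian shape in \eqref{E_post} is the CLT approximation already built into that expression, so ``exact'' here refers to the moments).

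The computations are routine; the only points needing care are recording $M=N$ so that the $M$ appearing in \eqref{E_post} correctly counts all the coordinates, and justifying the mutual independence of the $|g_i[t]|^2$, which is exactly what allows the maximum's CDF to factor and the energy's variance to split additively. Beyond this bookkeeping I do not expect any real obstacle.
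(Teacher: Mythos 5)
Your proof is correct and follows essentially the same route as the paper, whose one-line proof simply defers to Appendices A and B: there the general-$\alpha$ formulas are derived from the cross-correlation terms $\langle\vec{a}_i,\vec{a}_\ell\rangle$ and the Gershgorin eigenvalue bounds, both of which collapse to equalities when $\alpha=0$, which is exactly the vanishing of cross terms you exploit directly via \eqref{unitg}. Your explicit caveat that for $e[t]$ the ``exactness'' pertains to the moments while the Gaussian shape in \eqref{E_post} remains a CLT approximation is a point the paper's appendix also concedes but its lemma statement glosses over, so your treatment is if anything slightly more careful.
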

\begin{proof}
Follows  from \appref{E_sigvar} and \appref{G_sigvar}.
\end{proof}

\subsection{Symmetric Informationally Complete Positive Operator Valued Measure (SIC POVM)}
\label{sec:SIC_POVM}
In a $d$-dimensional Hilbert space, SIC POVM is described by a set of $d^2$ rank-1 projectors, $\mc{P}_d=\lbrace \Pi_i=\frac{1}{d}\vec{a}_i \vec{a}_i^* : 1 \leq i \leq d^2 \rbrace$, 
with the property, 
\begin{eqnarray*}
\label{eq:SICPOVM}
\text{tr}(\Pi_i \Pi_{\ell})=\frac{1}{d^2}|\langle \vec{a}_i, \vec{a}_{\ell} \rangle |^2=\frac{1+\delta_{i{\ell}}d}{d^2(1+d)},
\end{eqnarray*}
where $\delta_{i{\ell}} = 1$ if $i={\ell}$ and $zero$, otherwise. From these SIC-POVM projectors, we obtain $d^2$ \emph{equi-angular} 
vectors $\{\vec{a}_i\}$ of unit length such that $|  \langle \vec{a}_i, \vec{a}_{\ell} \rangle | =  \frac{1}{\sqrt{d+1}},~ i \neq {\ell}$. Setting $M=d$ and $N \leq d^2$, we construct a sensing matrix 
$\vec{A}_{M \times N}$ using (a subset of) the SIC POVM vectors as its columns. With this construction, the magnitude of the inner product between any two distinct columns of $\vec{A}$ will be equal to $\alpha$. 

\begin{lemma}
For an $M \times N$ sensing matrix constructed using SIC POVM of dimension $M$, when the signal covariance $\vec{C_x}$ and sparsity level $K$ are  known but the support $\mc{S}$ is unknown, the exact post-change pdf of the decision statistic  $g_i[t]$ in Aggregate CUSUM algorithm is obtained by substituting $\alpha=\frac{1}{\sqrt{M+1}}$ in  \eqref{G_post}.
\end{lemma}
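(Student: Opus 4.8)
The plan is to compute the exact post-change marginal law of $g_i[t]$ and check that it coincides with \eqref{G_post} after the substitution $\alpha^2 = 1/(M+1)$. Starting from $\vec{g}[t]=\vec{A}^*\vec{y}[t]$ in \eqref{gvec} together with the post-change model \eqref{y_in_terms_supp}, for $t\geq\nu$ the $i$-th entry is
\[
g_i[t] \;=\; \vec{a}_i^*\vec{y}[t] \;=\; \sum_{\ell\in\mc{S}}\big(\vec{a}_i^*\vec{a}_\ell\big)\,x_\ell[t] \;+\; \vec{a}_i^*\vec{n}[t].
\]
Since $\{x_\ell[t]\}_{\ell\in\mc{S}}$ are i.i.d.\ $\mathrm{CN}(0,\sigma_x^2)$ (here $\vec{C_x}=\sigma_x^2\vec{I}$) and independent of $\vec{n}[t]\sim\mathrm{CN}(\vec{0},\sigma_n^2\vec{I})$, the scalar $g_i[t]$ is a finite linear combination of jointly circularly-symmetric complex Gaussian variables, hence is itself zero-mean circularly-symmetric complex Gaussian; its law is therefore determined by its variance alone.

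Next I would evaluate that variance by independence:
\[
\var\big(g_i[t]\big) \;=\; \sigma_x^2\sum_{\ell\in\mc{S}}\big|\vec{a}_i^*\vec{a}_\ell\big|^2 \;+\; \sigma_n^2\,\norm{\vec{a}_i}^2 .
\]
For a SIC POVM matrix the columns are unit-norm, so $\norm{\vec{a}_i}^2=1$, and by the equi-angularity recalled in \secref{SIC_POVM}, $|\vec{a}_i^*\vec{a}_\ell|^2=\tfrac{1}{M+1}=\alpha^2$ for every $\ell\neq i$ (with $M=d$), while $|\vec{a}_i^*\vec{a}_i|^2=1$. Splitting on membership in $\mc{S}$: if $i\notin\mc{S}$ all $K$ summands are off-diagonal, giving $\var(g_i[t])=\sigma_n^2+K\alpha^2\sigma_x^2$; if $i\in\mc{S}$ exactly one summand is the diagonal term $\sigma_x^2$ and the remaining $K-1$ contribute $\alpha^2\sigma_x^2$ each, giving $\var(g_i[t])=\sigma_n^2+\sigma_x^2+(K-1)\alpha^2\sigma_x^2=\sigma_n^2+K\alpha^2\sigma_x^2+(1-\alpha^2)\sigma_x^2$. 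These are exactly the two branches of \eqref{G_post} evaluated at $\alpha=1/\sqrt{M+1}$, which is the claim.

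The only place where ``exactness'' is actually used is this: the derivation of \eqref{G_post} in \appref{G_sigvar} is obtained from the true variance expression above by replacing every off-diagonal magnitude $|\vec{a}_i^*\vec{a}_\ell|$ with the mutual coherence $\alpha$ — for a generic sensing matrix merely an (upper-bounding) approximation, but for SIC POVM an equality, since all distinct columns are equi-angular with magnitude $\alpha=1/\sqrt{M+1}$. I do not anticipate a genuine obstacle here, as this is a short second-moment computation; the only points needing care are the bookkeeping of the diagonal ($\ell=i$) term — which is what produces the extra $(1-\alpha^2)\sigma_x^2$ in the $i\in\mc{S}$ branch of \eqref{G_post} — and the observation that the signal variances are assumed known and equal to $\sigma_x^2$, so no worst-case-over-variances approximation is involved and the coherence substitution, rendered exact by the SIC POVM structure, is the sole source of approximation. (The analogous statement fails for $e[t]$ and $c[t]$: the Gaussian/CLT and independence-of-correlations approximations behind \eqref{E_post} and \eqref{C_post} are not repaired by equi-angularity, which is why the lemma asserts exactness only for $g_i[t]$.)
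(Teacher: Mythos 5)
Your proposal is correct and follows essentially the same route as the paper: the paper's proof is a pointer to \appref{G_sigvar}, which performs the same second-moment computation of $g_i[t]=\sum_{\ell\in\mc{S}}\langle\vec{a}_i,\vec{a}_\ell\rangle x_\ell[t]+\langle\vec{a}_i,\vec{n}[t]\rangle$ and substitutes $|\langle\vec{a}_i,\vec{a}_\ell\rangle|=\alpha$ for $\ell\neq i$, a step you correctly identify as exact (rather than an approximation) precisely because of the equi-angularity $|\langle\vec{a}_i,\vec{a}_\ell\rangle|=1/\sqrt{M+1}$ of the SIC POVM columns. Your bookkeeping of the diagonal term yielding $(1-\alpha^2)\sigma_x^2$ for $i\in\mc{S}$, and your remark on why exactness does not extend to $e[t]$ and $c[t]$, match the paper's treatment.
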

\begin{proof}
Follows from derivations in \appref{G_sigvar}.
\end{proof}

Though it is conjectured that SIC POVMs exist for every dimension $d$, the actual constructions for SIC POVMs are available only for some specific values of $d$ \cite{Scott:JMP:10,Grassl:JMP:17}. One of the popular techniques to obtain the SIC POVM vectors
$\{\vec{a}_i\}$ is to apply Weyl-Heisenberg (WH) displacement group operators to a fiducial vector 
\cite{Scott:JMP:10,Grassl:JMP:17}.    
The  Weyl-Heisenberg displacement group operators in dimension $d$ are generated by the cyclic shift operation $\hat{X}$ and its Fourier-transformed version  $\hat{Z}$ on a fixed orthonormal basis, say, 
$\lbrace \vec{e}_0,\vec{e}_1, \dots ,\vec{e}_{d-1} \rbrace $,
\begin{equation*}
\hat{X}=\sum\limits_{\substack{i=0}}^{d-1} \vec{e}_{i+1}\vec{e}_{i}^T \quad \text{and} \quad  \hat{Z}=\sum\limits_{\substack{i=0}}^{d-1} \omega_d^i \vec{e}_i \vec{e}_i^T,
\end{equation*}
where $\omega_d=\exp(\frac{2 \pi j }{d})$ is a complex primitive $d$-th root of unity and addition is modulo $d$. Without loss of generality, we can take $\lbrace \vec{e}_0,\vec{e}_1, \dots ,\vec{e}_{d-1} \rbrace $ to be the standard basis. 
The elements of the WH group, $\hat{X}^a\hat{Z}^b$, can be identified with pairs of integers, $(a,b) \in \mathbb{Z}_d \times \mathbb{Z}_d $.
The displacement operator is defined as $\hat{D}_{(a,b)}=\tau^{ab}\hat{X}^a\hat{Z}^b$ where the phase factor $\tau = -\exp{(\frac{\pi j }{d})}$. Now, the vectors in SIC POVM are constructed as
\begin{equation*}
\vec{a}_{(a,b)}  = \hat{D}_{(a,b)}  \vec{a}_{(0,0)}  , \quad  \text{where} \quad (a,b) \in \mathbb{Z}_d^2.
\end{equation*}
where $\vec{a}_{(0,0)}$ is referred to as the fiducial vector, which is available for some specific dimensions 
\cite{Scott:JMP:10,Grassl:JMP:17}.

\subsection{Mutually Unbiased Bases (MUB)}
In a Hilbert space of dimension $d$, MUBs are a set of $d+1$ unitary matrices (orthonormal bases), i.e.,   $\mc{M}_d=\big \lbrace\vec{U}_k=[ \vec{a}_0^k  , \dots , \vec{a}_{d-1}^k ] : 0\leq k \leq d  \big \rbrace$, such that the following properties are satisfied,
\begin{eqnarray*}
\label{eq:MUB}
\begin{split}
|\langle \vec{a}_{l}^{k}, \vec{a}_{r}^{q} \rangle|^2 &= \frac{1}{d}, \quad 
\begin{cases}
\forall k\neq q \in \lbrace 0,1,\dots,d \rbrace, \\
 \forall l,r \in \lbrace 0,1,\dots,d-1 \rbrace, 
\end{cases}
        \end{split} 
\end{eqnarray*} 
where $\vec{a}_{l}^{k}$ is the $l$-th column (basis vector) of $k$-th unitary matrix $\vec{U}_k$  and $\vec{a}_{r}^{q}$ is the $r$-th column (basis vector) of $q$-th unitary matrix $\vec{U}_q$ in $\mathcal{M}_d$.

We can design a sensing matrix $\vec{A}_{M \times N}$ with mutual coherence $\alpha=\frac{1}{\sqrt{d}}$ by using MUBs generated for dimension $d=M$. Let $\floor{\frac{N}{M+1}} = r$. We take $r$ columns from every unitary matrix $\lbrace \vec{U}_i \in \mc{M}_d : 0\leq i \leq d\rbrace $ and the 
remaining $N - r(M+1)$ columns are chosen, one each from the first $N - r(M+1)$ unitary (MUB) matrices. In this construction, the columns of the sensing matrix 
are uniformly distributed across all the MUB matrices so that, the magnitude of the inner product between any two randomly chosen columns 
is equal to $\alpha$ with a high probability. This construction of sensing matrix using MUB 
tries to mimic the \emph{equi-angular} effect of SIC POVM. Construction of MUBs is available for dimensions $d=p^n$ where $p$ is a prime number and $n$ is a non negative integer. The procedure to construct MUBs is detailed in \cite{Klappenecker:FQ7:04}. 

\subsection{Approximately Mutually Unbiased Bases (AMUB)}
To overcome the constraint on dimension for the construction of MUBs, we can use AMUBs to design sensing matrices as they can be generated for all dimensions. For any non prime dimension, $\mathbb{C}^d$, AMUB is a set of $d+1$ unitary matrices (orthonormal bases), i.e.,  $\mc{A}_d=\big \lbrace\vec{V}_k=[ \vec{a}_0^k  , \dots , \vec{a}_{d-1}^k ] : 0\leq k \leq d  \big \rbrace$, but at the cost of relaxing the condition,
\begin{align*}
\label{eq:AMUB}
    |\langle \vec{a}_{l}^{k}, \vec{a}_{r}^{q} \rangle|^2 = \left\{\begin{array}{lr}
       \frac{1+o(1)}{d}  \vspace{1mm} \hspace{2mm} \text{or}\\
 \frac{1+o(\log{d})}{d}
        \end{array}\right\}    \ \parbox{2in}{$\forall k\neq q \in \lbrace 0,1,\dots,d \rbrace$, \\
 $\forall l,r \in \lbrace 0,1,\dots,d-1 \rbrace,$}
\end{align*} 
where $\vec{a}_{l}^{k}$ is the $l$-th column (basis vector) of $k$-th unitary matrix $\vec{V}_k$  and $\vec{a}_{r}^{q}$ is the $r$-th column (basis vector) of  $q$-th unitary matrix $\vec{V}_q$ in $\mc{A}_d$. Sensing matrices can be constructed from AMUBs in the same manner as that 
from MUBs. Detailed procedure to construct AMUBs is presented in \cite{Shparlinski:TI:06}.

It is worth noting that the mutual coherence of the deterministic sensing matrix $\vec{A}_{M \times N}$ constructed from SIC POVM, MUB or AMUB is inversely proportional to $\sqrt{M}$.

\subsection{Random Sensing Matrices}
We can construct $M \times N$ sensing matrices by choosing the $M$ rows randomly from an $N$-dimensional discrete Fourier transform (DFT) matrix. Similarly we can also construct sensing matrices by generating random Bernoulli or complex Gaussian ensembles \cite{Candes:SPM:08}. It is important that the columns of these sensing matrices should be normalized to unit norm. Using randomly generated sensing matrices yields approximate distributions for all decision statistics.

\section{Simulation Results}
\label{Simulations}
In this section, we present the results obtained from Monte Carlo simulations of the algorithms described in \secref{cdet} and our inferences thereof, by comparing their performance based on worst case detection delay $D_w$ in \eqref{wd}  and average run length $T_r$ in \eqref{rl}. 
To find $D_w$, we fix the change point to be $\nu=20$ for all the simulations. 
We compare the performance with SNR defined as
\begin{equation}
\label{eq:SNR}
\text{SNR (dB)}=10\log_{10}{\frac{\E\norm{\vec{x}}^2_2}{\E\norm{\vec{n}}^2_2}} = 10\log_{10}{\frac{\sum_{i \in \mc{S}}\sigma^2_{i}}{M\sigma_n^2}}.
\end{equation}
In the simulations, we set $\sigma_n^2 = 1$.

\subsection{Effect of Sensing Matrix}
The dimensions of the sensing matrix $\vec{A}$ are fixed as $M=124,N=200$. We define the compression ratio of the sensing matrix as $c_r=\frac{M}{N}=0.62$. We set $K=5$ and the support indices of $\vec{x}[t]$ are randomly selected from $\lbrace 1,\cdots,N\rbrace$. The non-zero entries of $\vec{x}[t]$ are drawn randomly from $\mathrm{CN}(\vec{0},\vec{C_x})$, where $\vec{C_x}=\sigma_x^2\vec{I}$ and $\sigma_x^2$ is chosen according to \eqref{SNR} for a particular value of SNR. We consider deterministic sensing matrices constructed using SIC POVM, MUB and AMUB, using the procedure detailed in \secref{sdes}. For sensing matrix designed using MUB, we choose the prime power closest to 124 and take $M=125$. A truncated DFT matrix with randomly chosen rows is also 
considered. We also construct random sensing matrices with i.i.d. complex Gaussian (CN) entries and i.i.d. Bernoulli (BER) entries.

\begin{figure*}
  \centering
  \subfloat[Energy CUSUM]{\includegraphics[scale=0.562]{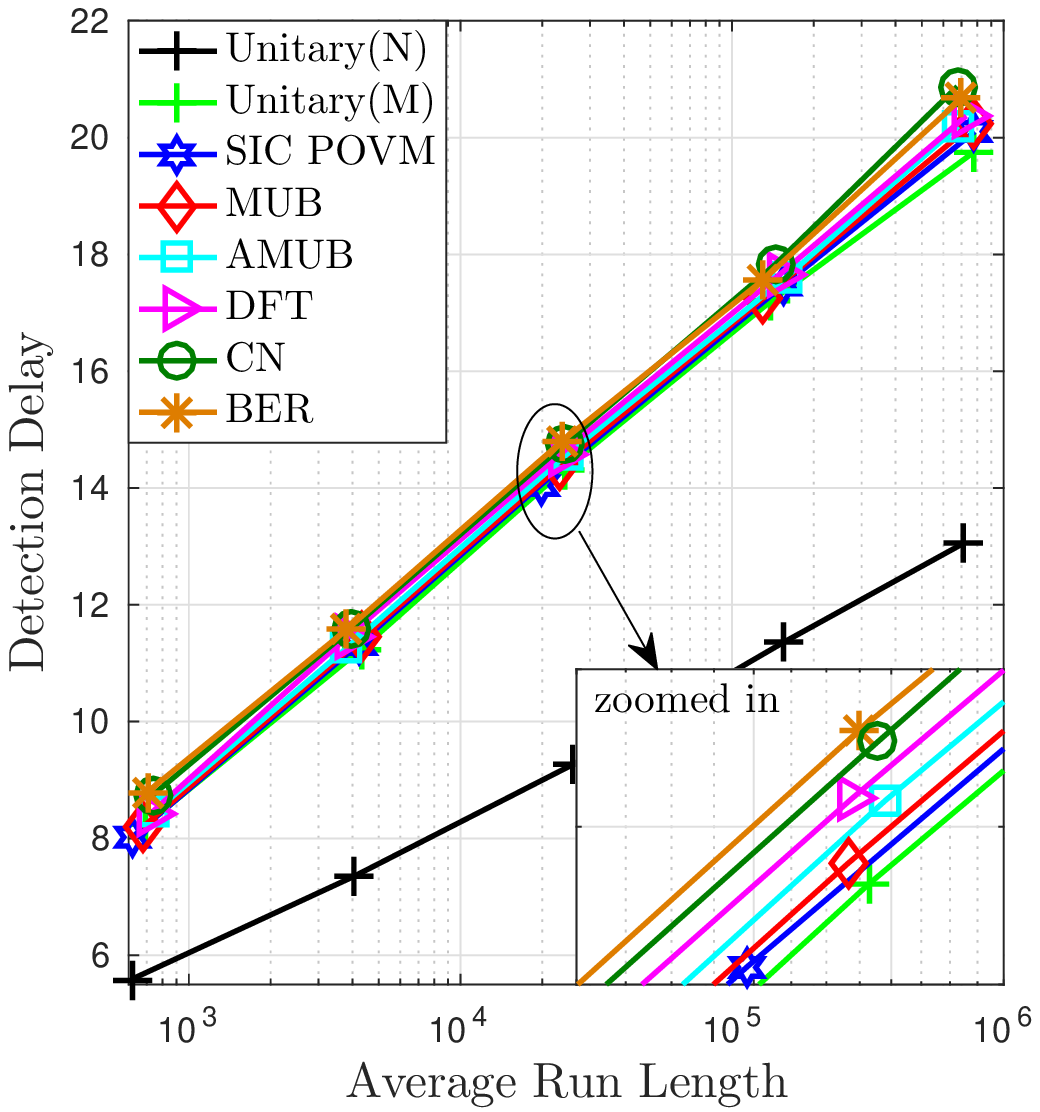}}
  \label{y_norm}
  \subfloat[Correlator CUSUM]{\includegraphics[scale=0.562]{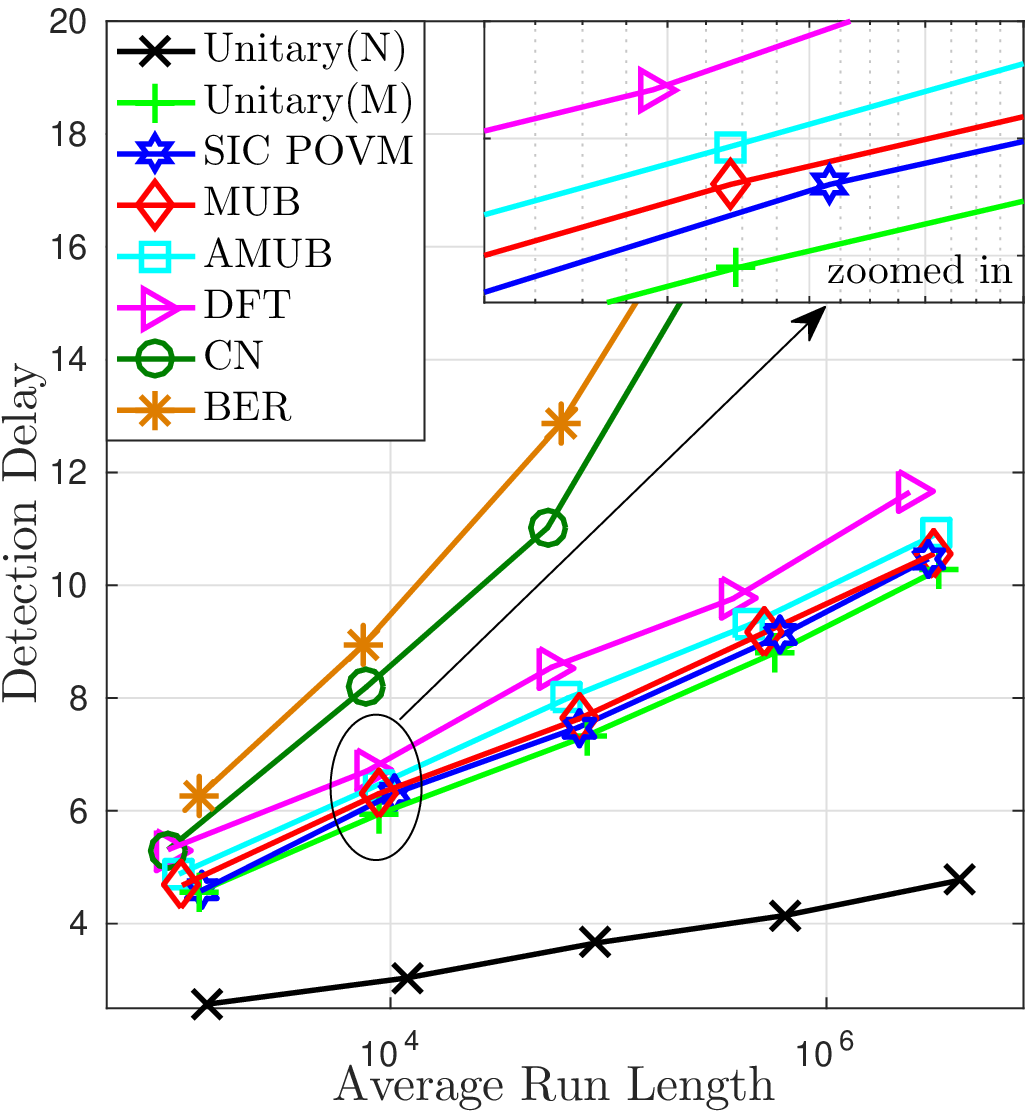}}
  \label{inf_norm}
   \subfloat[Aggregate CUSUM]{\includegraphics[scale=0.562]{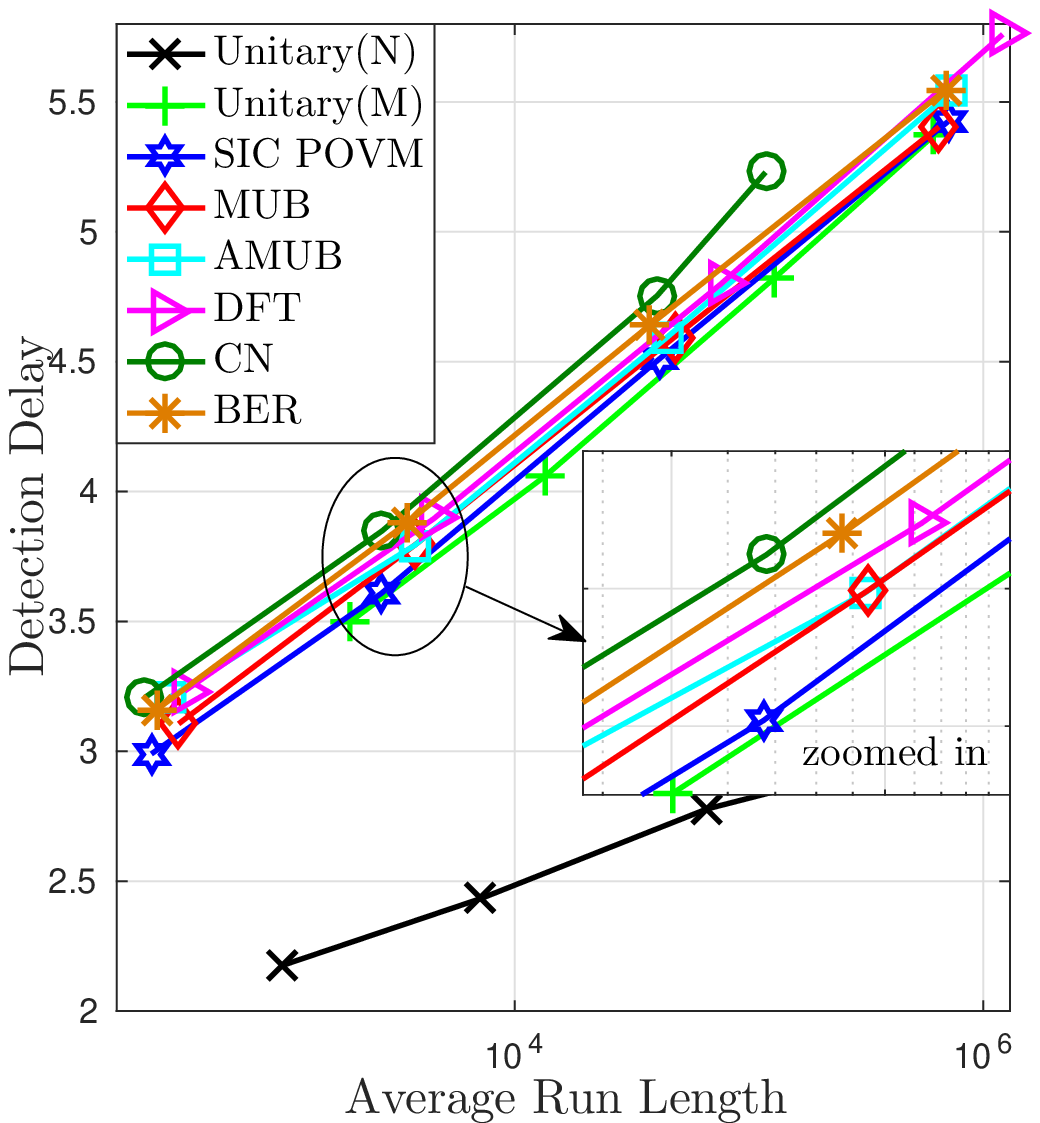}}
   \label{Aggregate}
   \caption{Comparison of various  $124 \times 200$ sensing matrix designs at SNR $= -10$ dB.}
   \label{fig:all_mat}
\end{figure*}

\begin{figure*}
  \centering
  \subfloat[Energy CUSUM]{\includegraphics[scale=0.562]{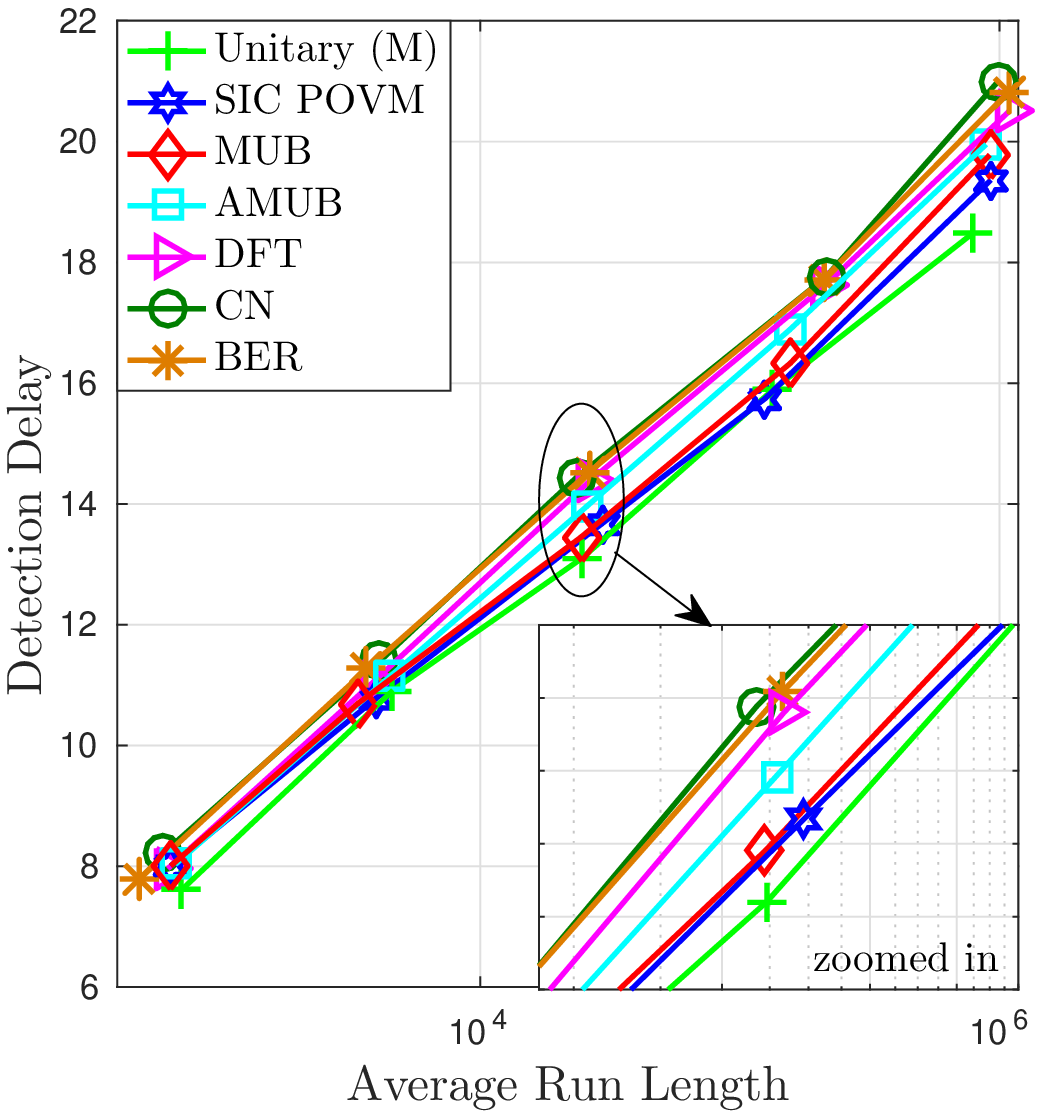}}
  \label{y_norm1}
  \subfloat[Correlator CUSUM]{\includegraphics[scale=0.562]{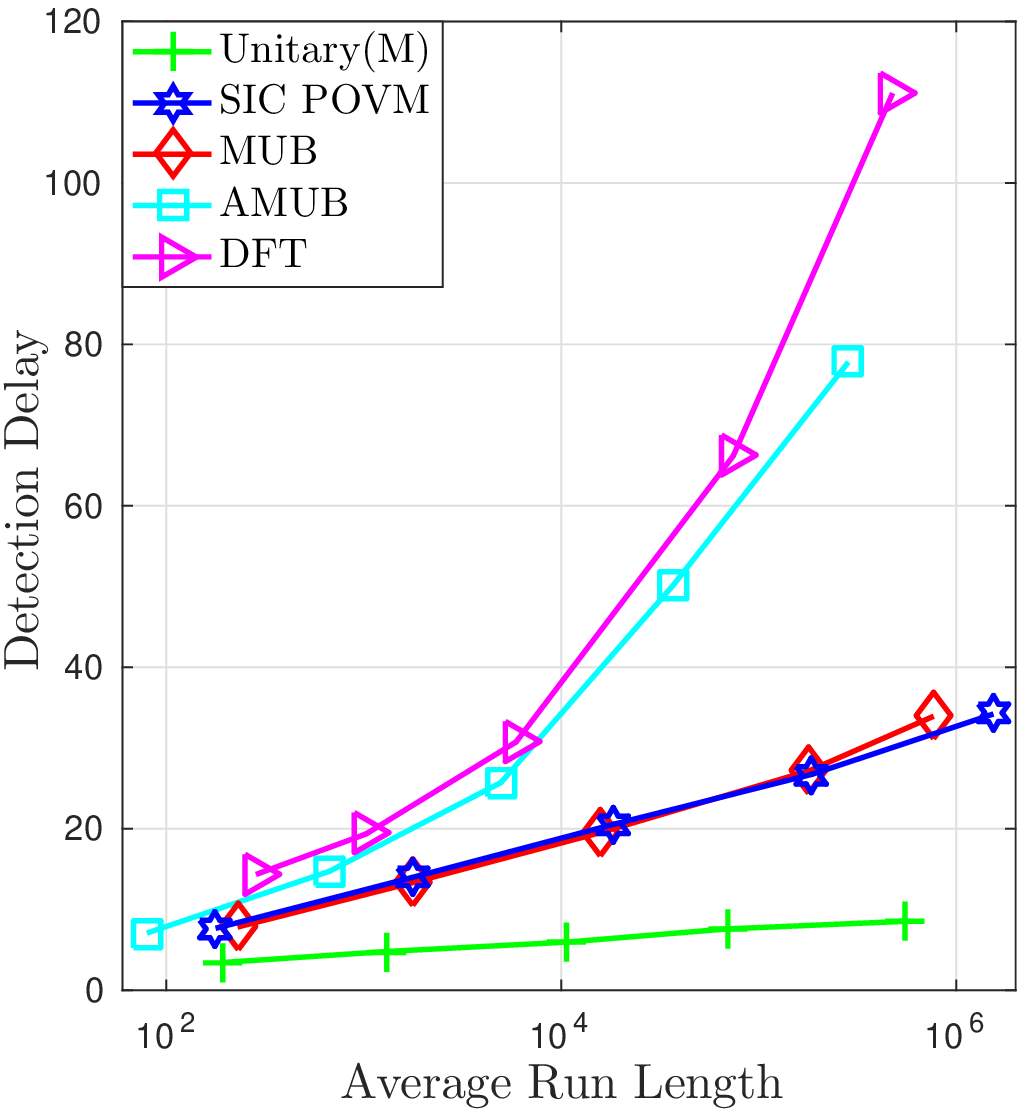}}
  \label{inf_norm1}
   \subfloat[Aggregate CUSUM]{\includegraphics[scale=0.562]{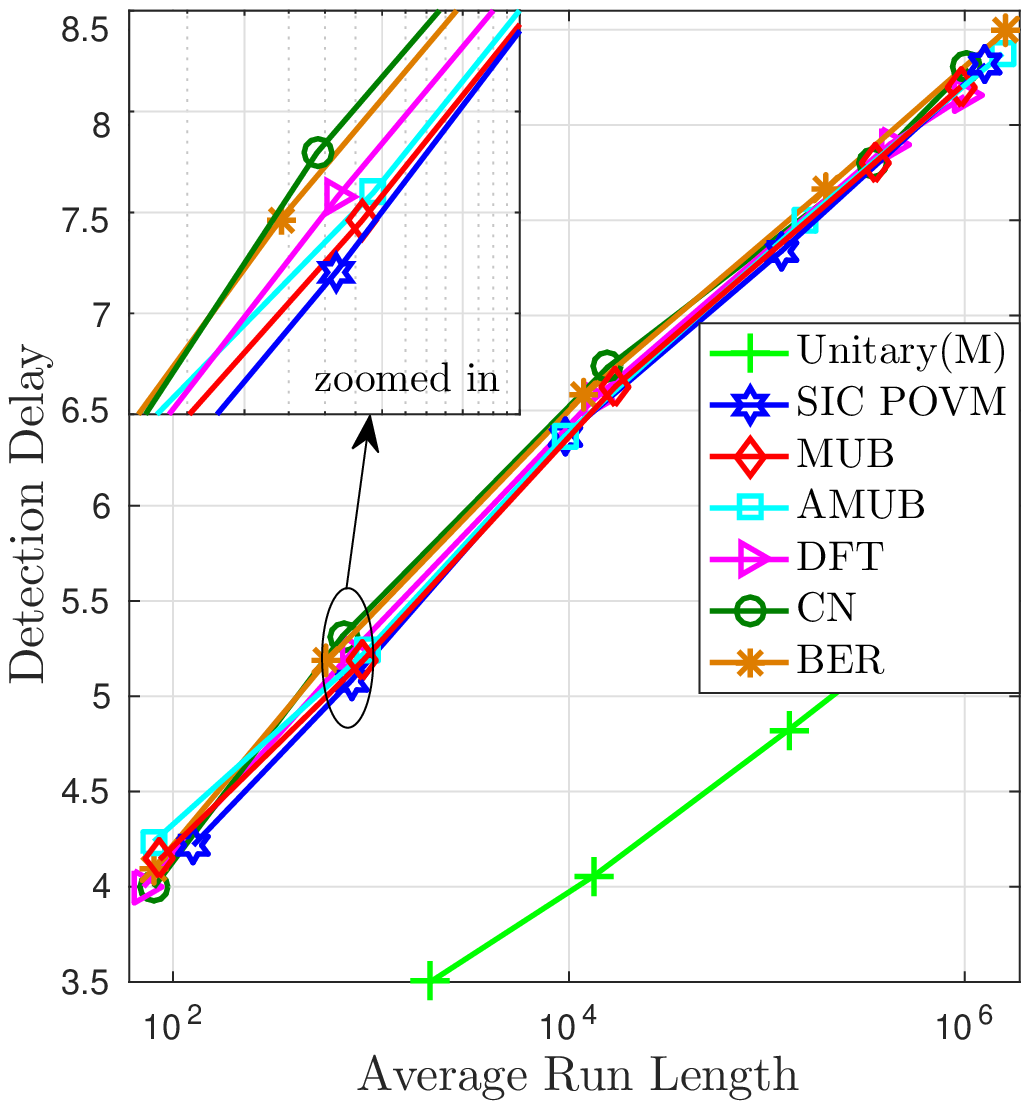}}
   \label{Aggregate1}
   \caption{Comparison of various  $124 \times 124^2$ sensing matrix designs at SNR $= -10$ dB.}
   \label{fig:all_mat_dd2}
\end{figure*} 

\Figref{all_mat} shows the performance of various sensing matrices in terms of $T_r$ versus $D_w$, at SNR$= -10$ dB. 
Unitary ($M$) denotes $M \times M$ unitary matrix that has the same number of measurements as $\vec{A}$ where as Unitary ($N$) denotes $N \times N$ unitary matrix that retains the same number of columns as $\vec{A}$. We infer that deterministic sensing matrices constructed from SIC POVM, MUB and AMUB give better performance as compared to random sensing matrices since they have a lower value of mutual coherence. In addition, these deterministic matrices yield near-exact post-change distributions for various decision statistics. Thus, as evident from  \Figref{all_mat}, the performances of SIC POVM and MUB sensing matrices, are closest to that of Unitary (M) matrix, followed by AMUB, DFT, complex Gaussian   and Bernoulli  sensing matrices. \Figref{all_mat_dd2} shows similar trends when the number of columns in $\vec{A}$ are increased to $N=124^2$ and the compression ratio  decreases to $c_r=0.008$. 
However, the detection delay $D_w$  for a specific $T_r$ increases for Aggregate and Correlator CUSUM. 
Also, Correlator CUSUM fails for random complex Gaussian and Bernoulli sensing matrices as the independence assumption on the 
entries of $\vec{g}[t]$ does not hold true due to the high mutual coherence of these random sensing matrices  when $c_r \ll 0.5$. 

\subsection{Comparison of Various Decision Statistics}
We fix the dimensions of $\vec{A}$ to be $M=124, N=200$ and compression ratio $c_r=0.62$. The columns of $\vec{A}$ are SIC POVMs generated for dimension, $M=124$. The sparsity level $K=5$ and signal covariance $\vec{C_x}=\sigma_x^2\vec{I}$, where $\sigma_x^2$ is chosen according to \eqref{SNR}.

\Figref{all_methods}  illustrates the $T_r$ versus $D_w$ plot for  various algorithms at SNRs $-20$ dB, $-10$ dB and $0$ dB. Aggregate CUSUM algorithm performs better than other support-oblivious algorithms at all SNRs, but with  additional computational complexity. 
We also observe that when SNR is $-10$ dB and above, Correlator CUSUM performs better than Energy CUSUM. 
On the other hand, when SNR is very low at $-20$ dB, the performance of Energy CUSUM becomes better than Correlator CUSUM, since 
entries in the correlation vector $\vec{g}[t]$ are highly corrupted by noise.  
The dashed lines in \Figref{all_methods} plot the performance of the parallel CUSUM rule in \eqref{unknown_sparsity} for the aforesaid algorithms when the signal covariance is known but the support and sparsity level are unknown. We fix the maximum sparsity level $K_{\max}=10$. A slight deterioration in performance of all decision statistics is observed as compared to the case when the value of $K$ is perfectly known. 
The performance of PSE CUSUM is poorer than other algorithms because accurate sparse support recovery $\vec{x}[t]$ becomes difficult to achieve with OMP algorithm at SNRs below $0$ dB. For PSE, the size of the recovered support set $K_p$ is kept identical to the true sparsity level $K$, which gives the best performance compared to any other value of $K_p$.
\begin{figure*}
  \centering
  \subfloat[SNR = 0 dB]{\includegraphics[scale=0.561]{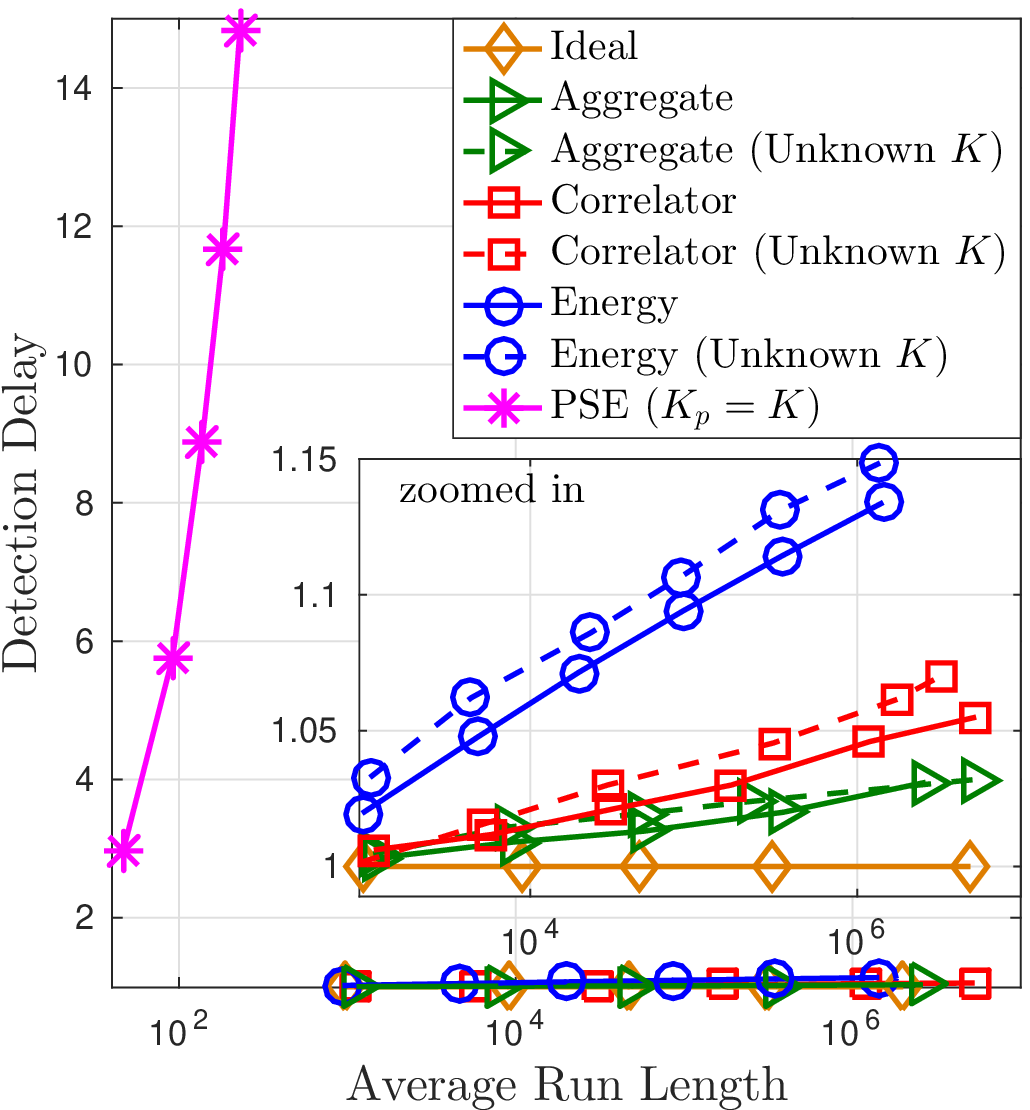}}
  \label{0db}
  \subfloat[SNR = -10 dB]{\includegraphics[scale=0.561]{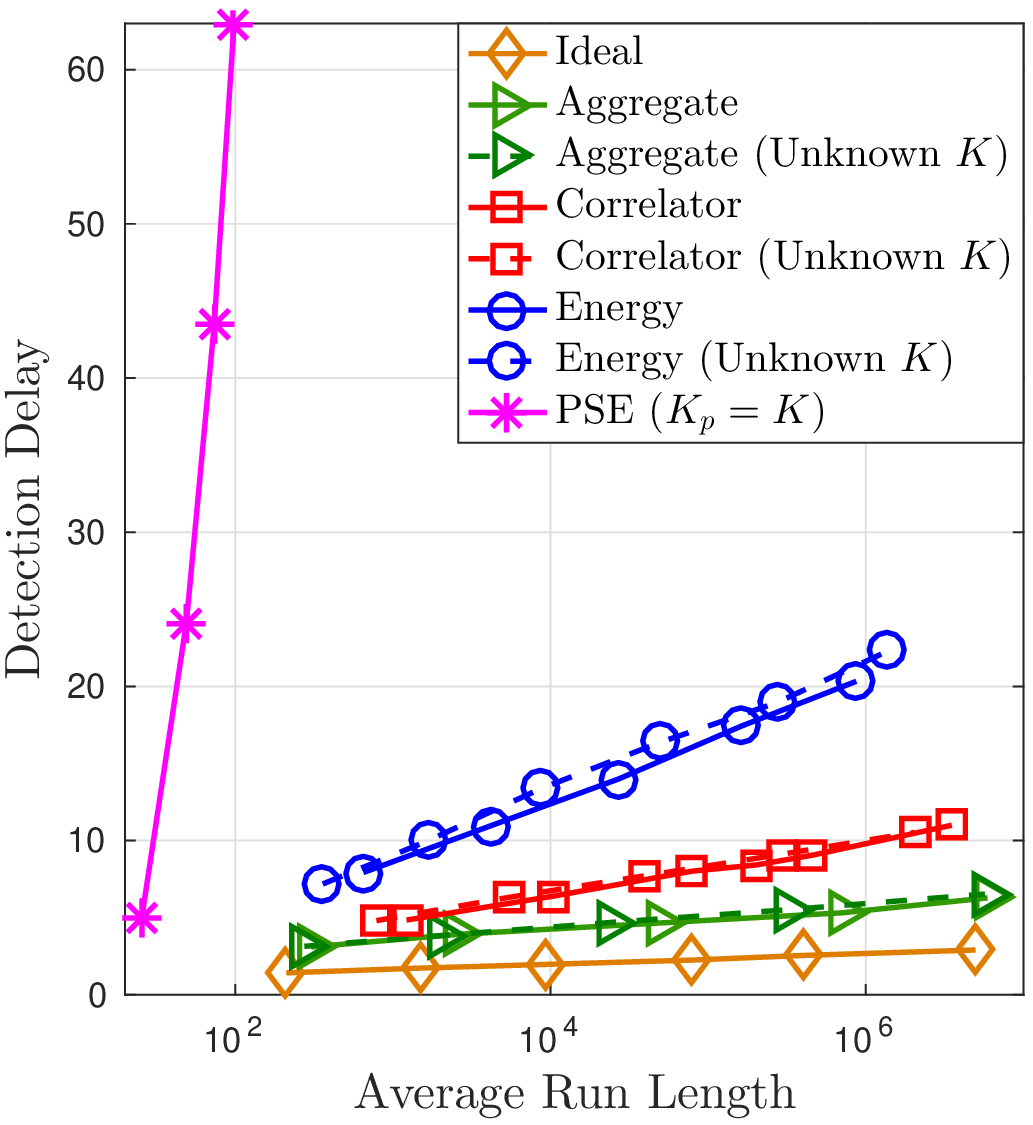}}
  \label{minus10}
   \subfloat[SNR = -20 dB]{\includegraphics[scale=0.564]{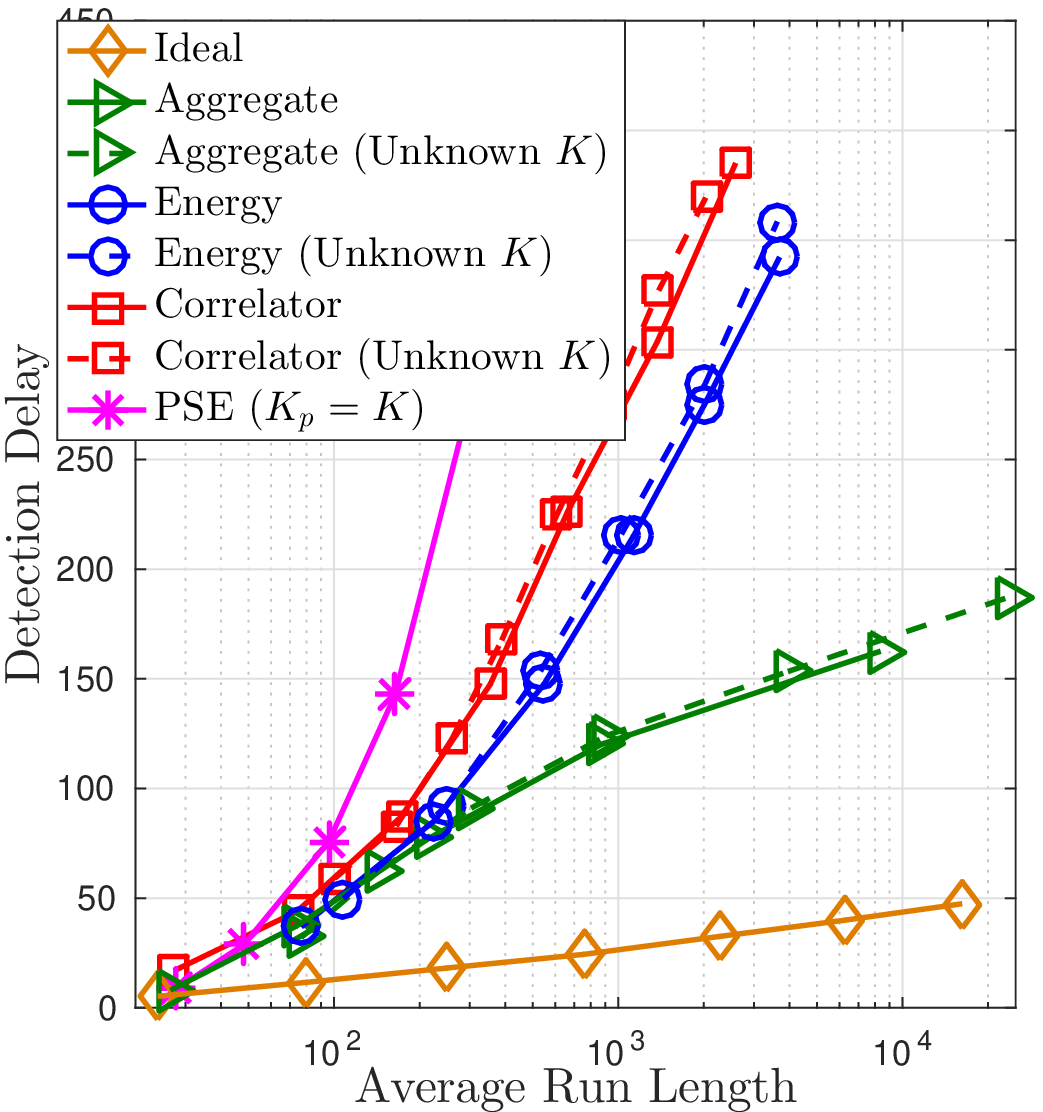}}
   \label{minus20}
   \caption{Comparison of various decision statistics using $124 \times 200$ SIC POVM sensing matrix at different SNRs.}
   \label{fig:all_methods}
\end{figure*}
\begin{figure*}
  \centering
  \subfloat[SNR = 0 dB]{\includegraphics[scale=0.57]{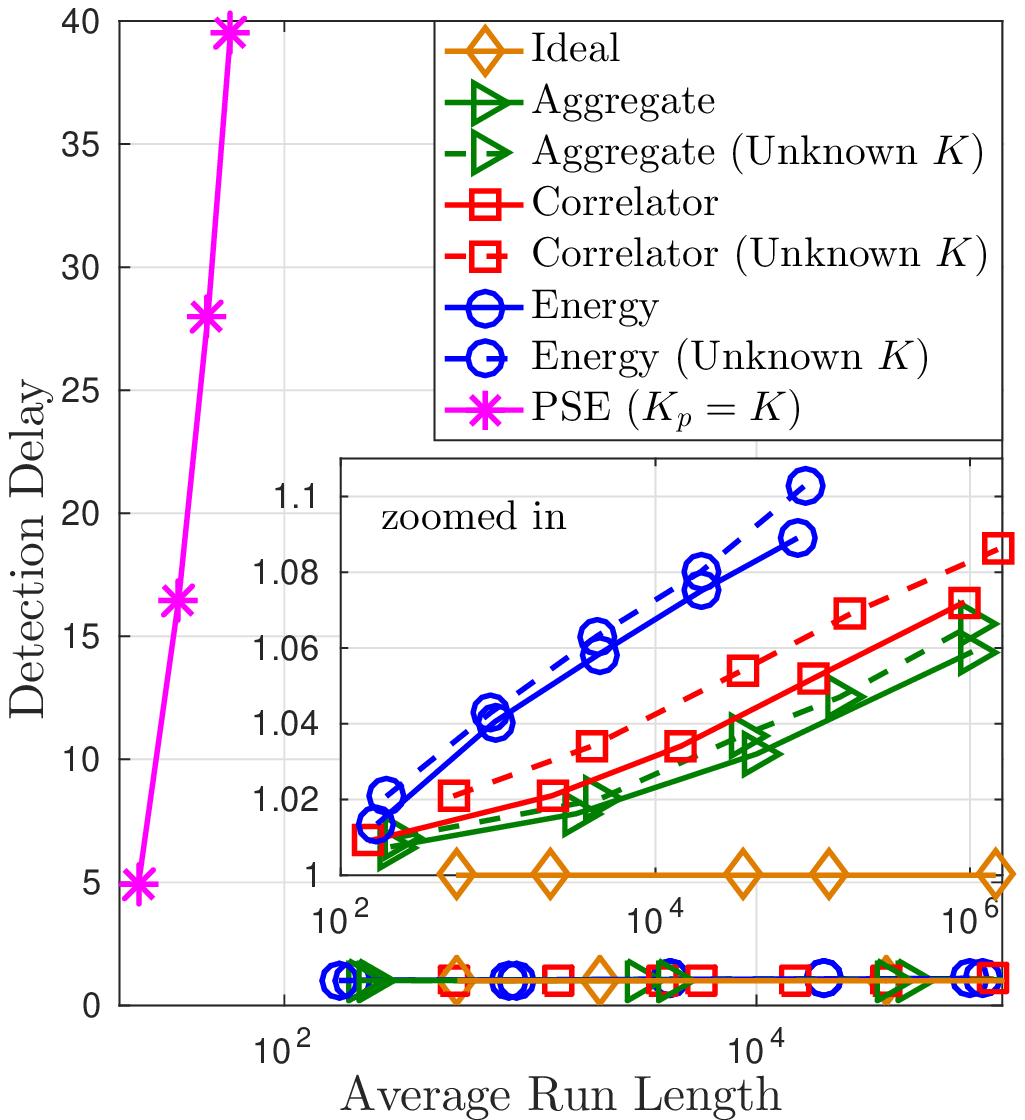}}%\hspace{1mm}
  \subfloat[SNR = -10 dB]{\includegraphics[scale=0.57]{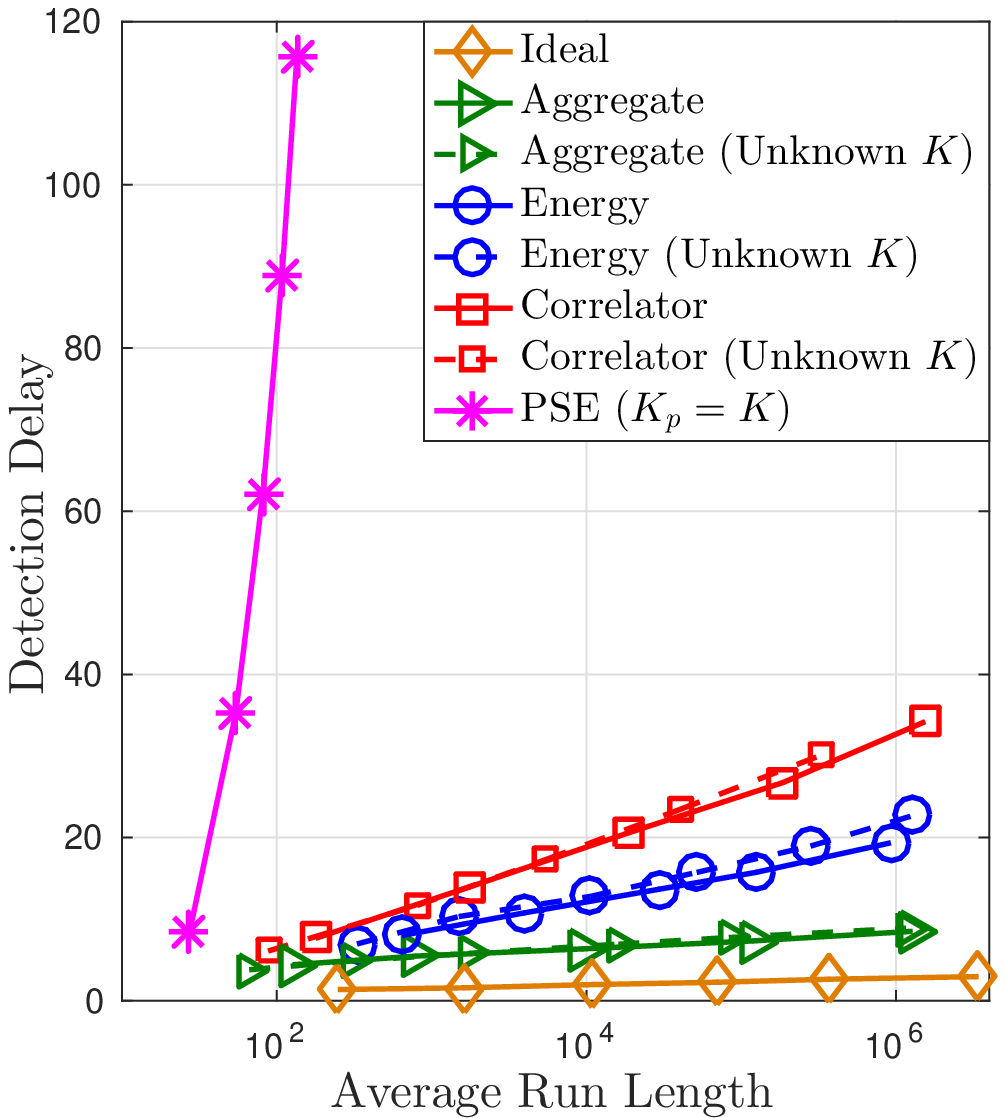}}%\hspace{1mm}
  \subfloat[SNR = -20 dB]{\includegraphics[scale=0.57]{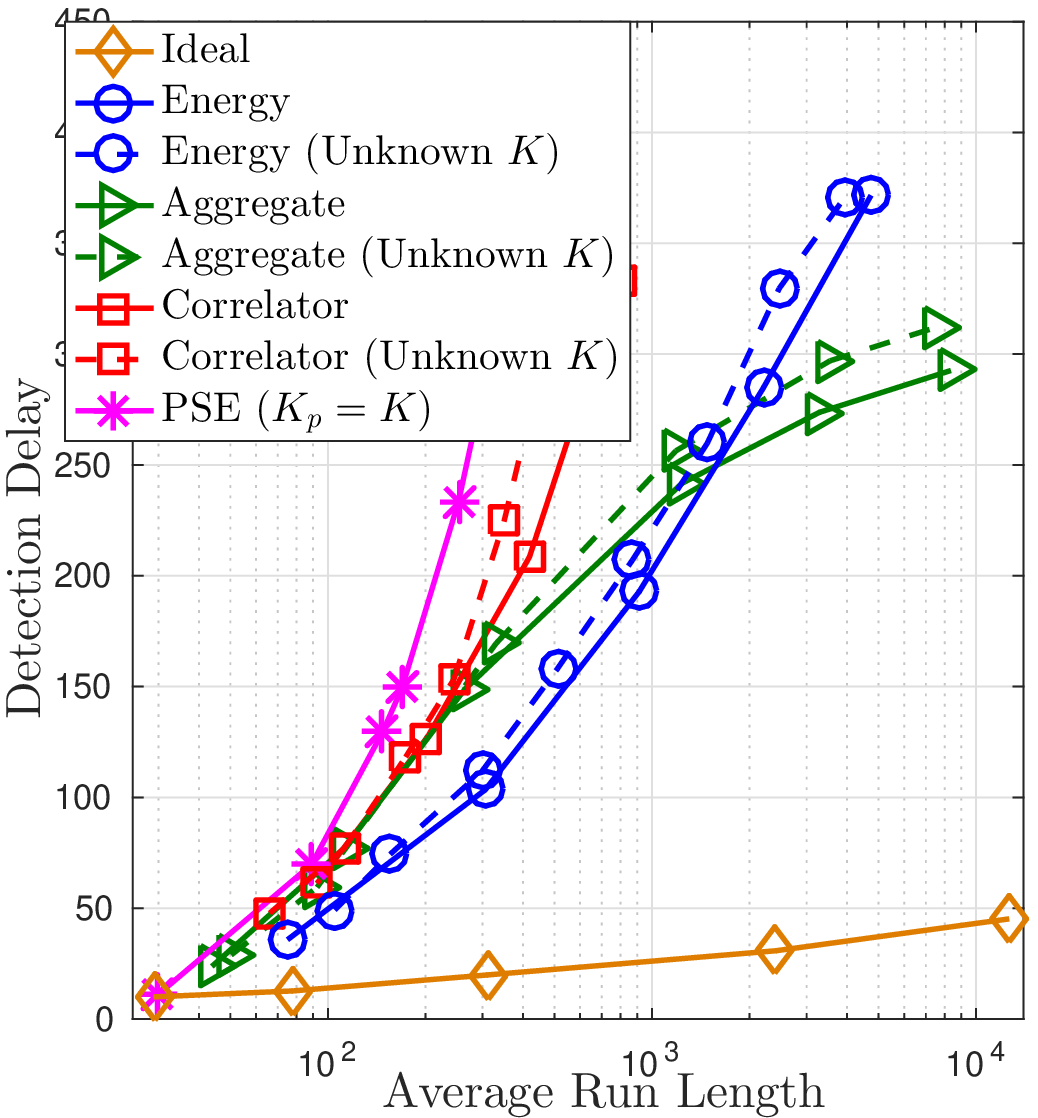}}
     \caption{Comparison of various decision statistics using $124 \times 124^2$ SIC POVM sensing matrix at different SNRs.}
      \label{fig:dd2_all}
\end{figure*}
We also consider the SIC POVM sensing matrix of size $M=124,~N=124^2$, with compression ratio $c_r=0.008$. All the other parameters are kept same as above. In \Figref{dd2_all}(a), at SNR $0$ dB, the detection performance of various algorithms follows the same trend as that shown in \Figref{all_methods}(a), but the overall detection delay $D_w$ is larger for a given $T_r$. 
In \Figref{dd2_all}(b), we see that Aggregate CUSUM algorithm performs close to Ideal CUSUM at SNR$=-10$ dB. At SNR$=-20$ dB, Aggregate CUSUM performs better than Energy CUSUM for higher values of the $T_r$ and poorer than Energy CUSUM for relatively smaller values of the $T_r$. 
Both Aggregate and Correlator CUSUM inherently assume/approximate that the entries in the correlation vector $\vec{g}[t]$ are independent. However, when $N$ is very large as compared to $M$, this independence approximation becomes inaccurate and the performance of these algorithms suffers. 

 \begin{figure}
  \centering
  \subfloat[$ M=124,N=200,K=5$]{\includegraphics[scale=0.46]{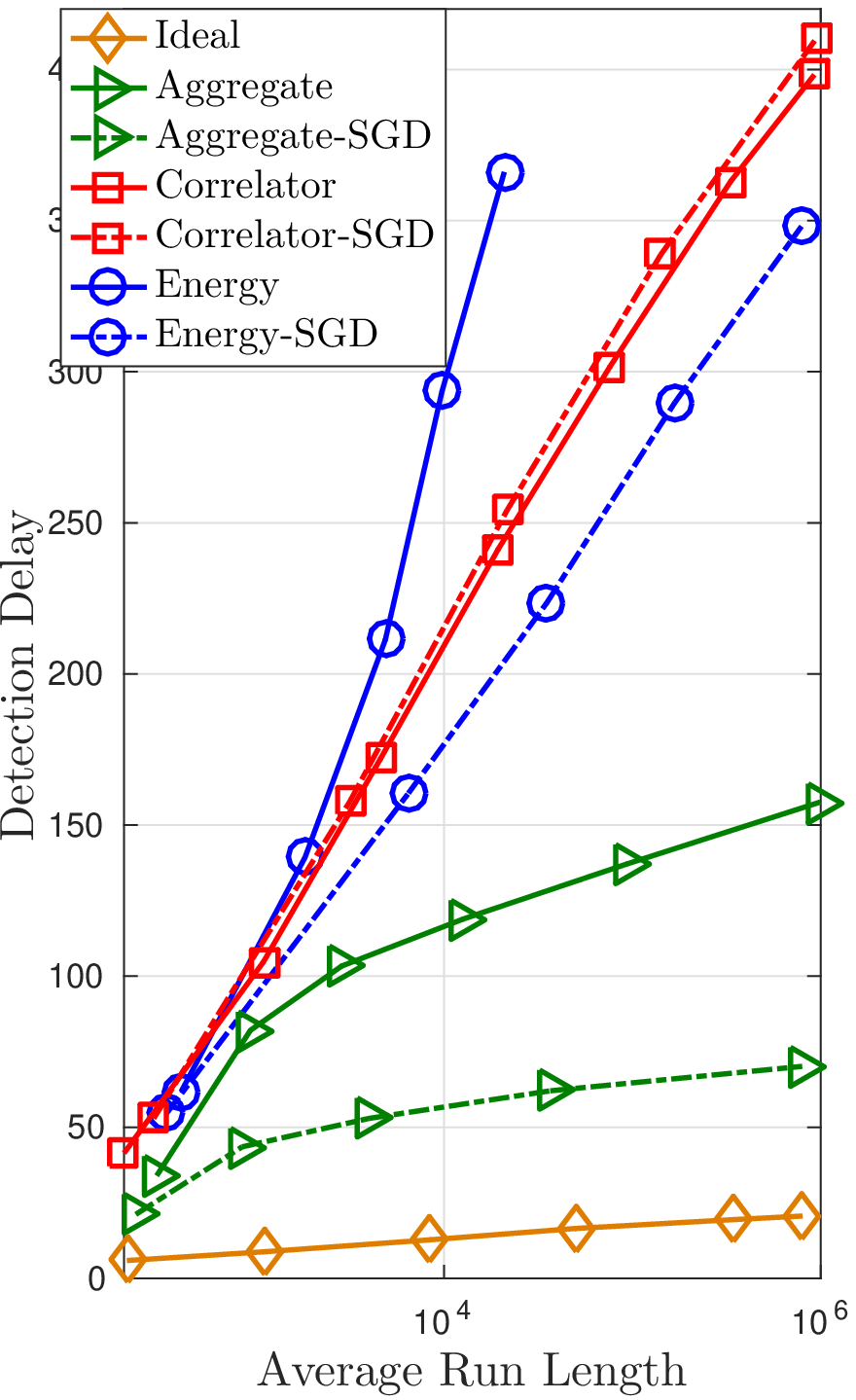}} %\hspace{1mm}
  \subfloat[$M=124,N=124^2,K=5$]{\includegraphics[scale=0.46]{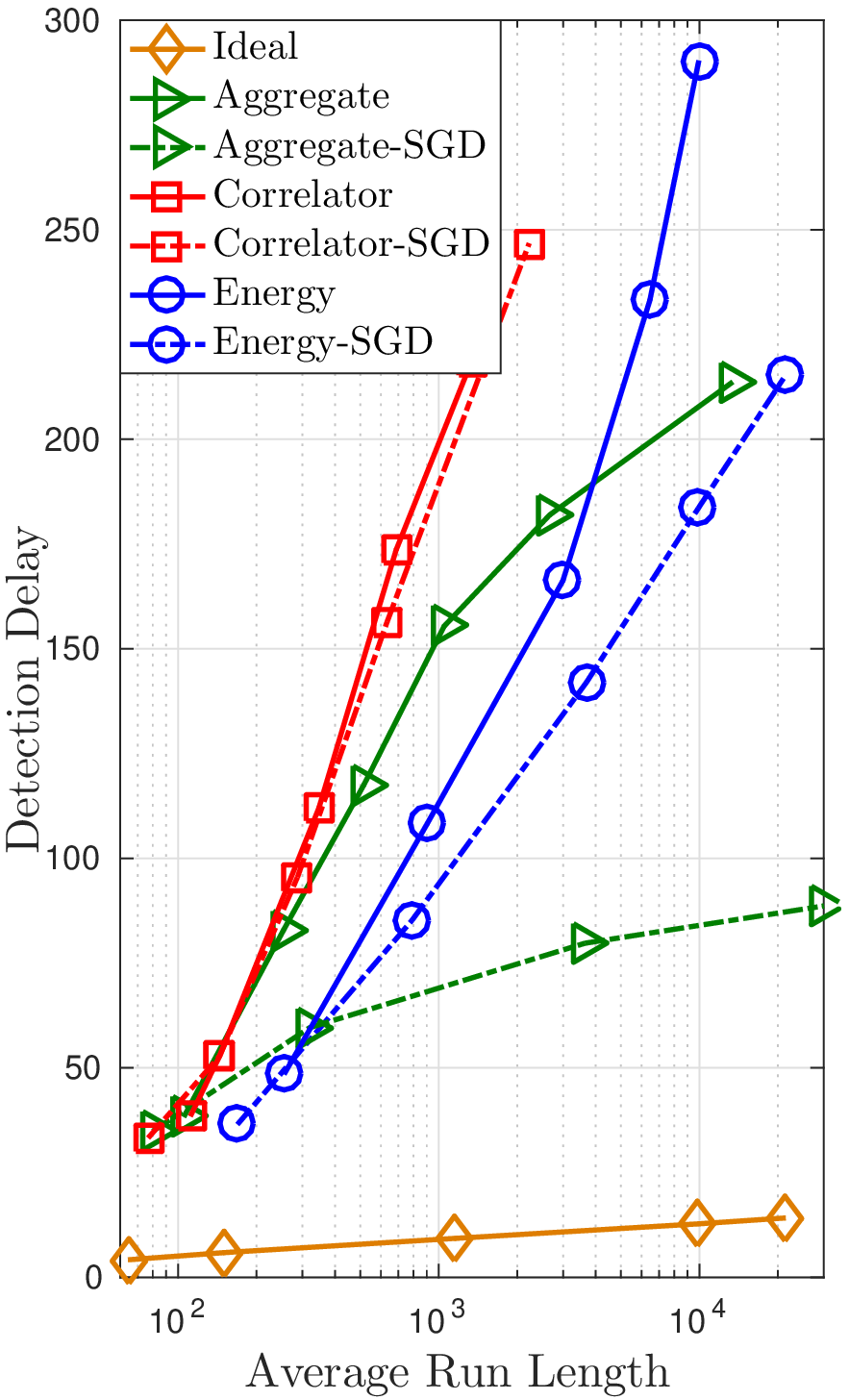}}
     \caption{Comparison of various decision statistics using SIC POVM sensing matrix with  $ \sigma^2_{\min}=0.1,\sigma^2_{\max}=1$ and $\sigma_n^2=1$ when $\mc{S}$ and $\vec{C_x}$ are unknown.}
      \label{fig:sigvar_all}
\end{figure}

\subsection{Unknown Support, Signal Variance and Sparsity level} 
Now, we address various cases regarding the knowledge of support, sparsity level and signal variance and present the corresponding simulation 
results. When the signal variance is unknown, we generate the non-zero entries of $\vec{x}[t]$ with (unequal) variances which are uniformly distributed in the interval $[\sigma^2_{\min},\sigma_{\max}^2]$. When the signal variances are unknown, SGD CUSUM algorithm (which tries
to estimate the unknown parameters) as discussed in \secref{unknown_sig_var} can be used in addition to the detection techniques that are based on approximating the
post-change pdf. 
For the SGD CUSUM, we set step size $a=0.01$ and window length $c=0.05$ in \eqref{sgd_llr}, in all the simulations.

When the sparsity level $K$ is assumed to be known a priori but the support $\mc{S}$ and signal covariance $\vec{C_x}$ are unknown, we employ the techniques given in \secref{unknown_sig_var} and plot their performance in \Figref{sigvar_all}. 

When the support $\mc{S}$, signal covariance $\vec{C_x}$ and the sparsity level $K$ of the signal are unknown, the methods outlined in \secref{unknown_sparsity} are used and the results are shown in \Figref{sigvar_sparsity_all}.

\begin{figure}
  \centering
  \subfloat[$ \sigma^2_{\min}=0.1,\sigma^2_{\max}=1$]{\includegraphics[scale=0.42]{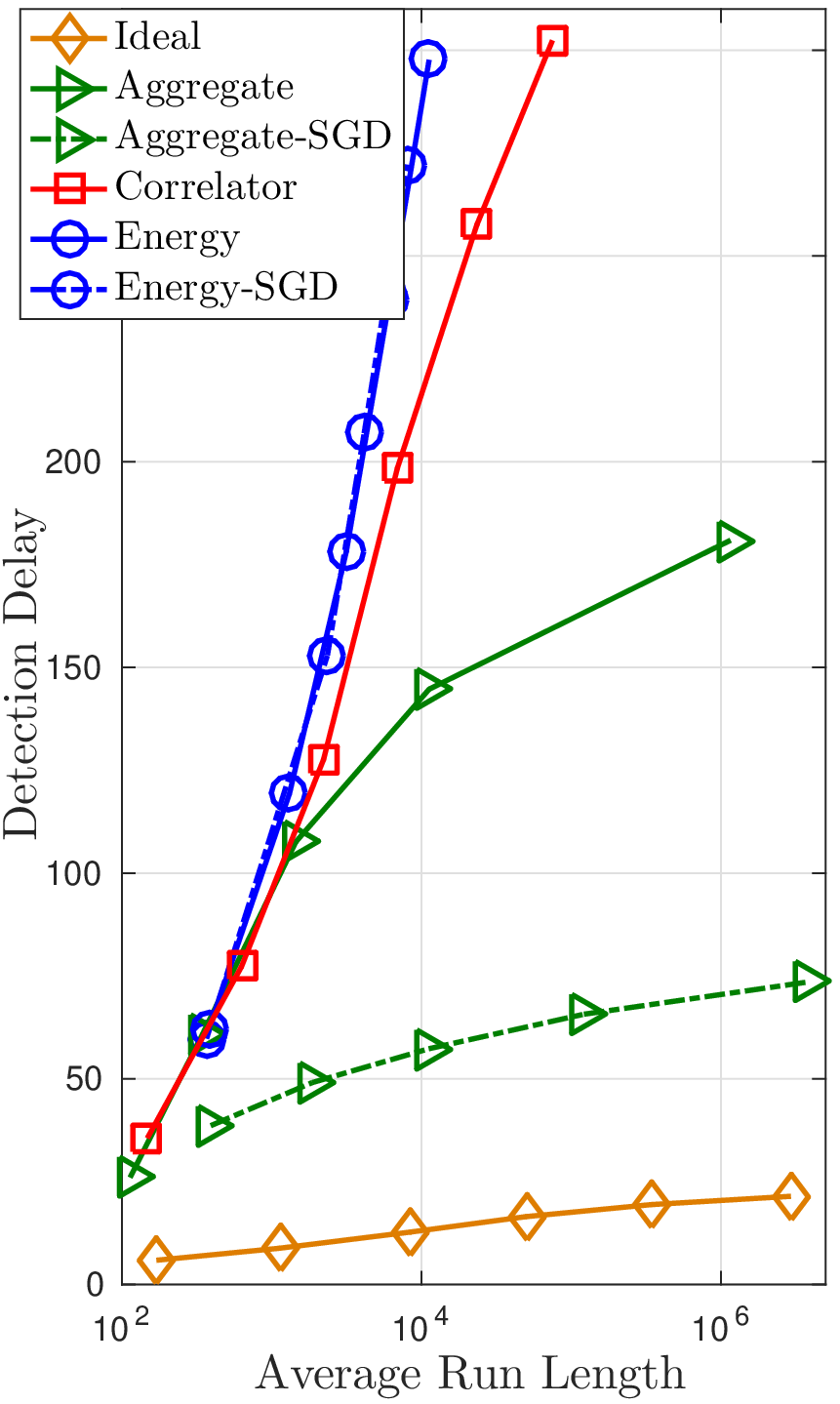}} %\hspace{1mm}
  \subfloat[$ \sigma^2_{\min}=0.05,\sigma^2_{\max}=.5$]{\includegraphics[scale=0.42]{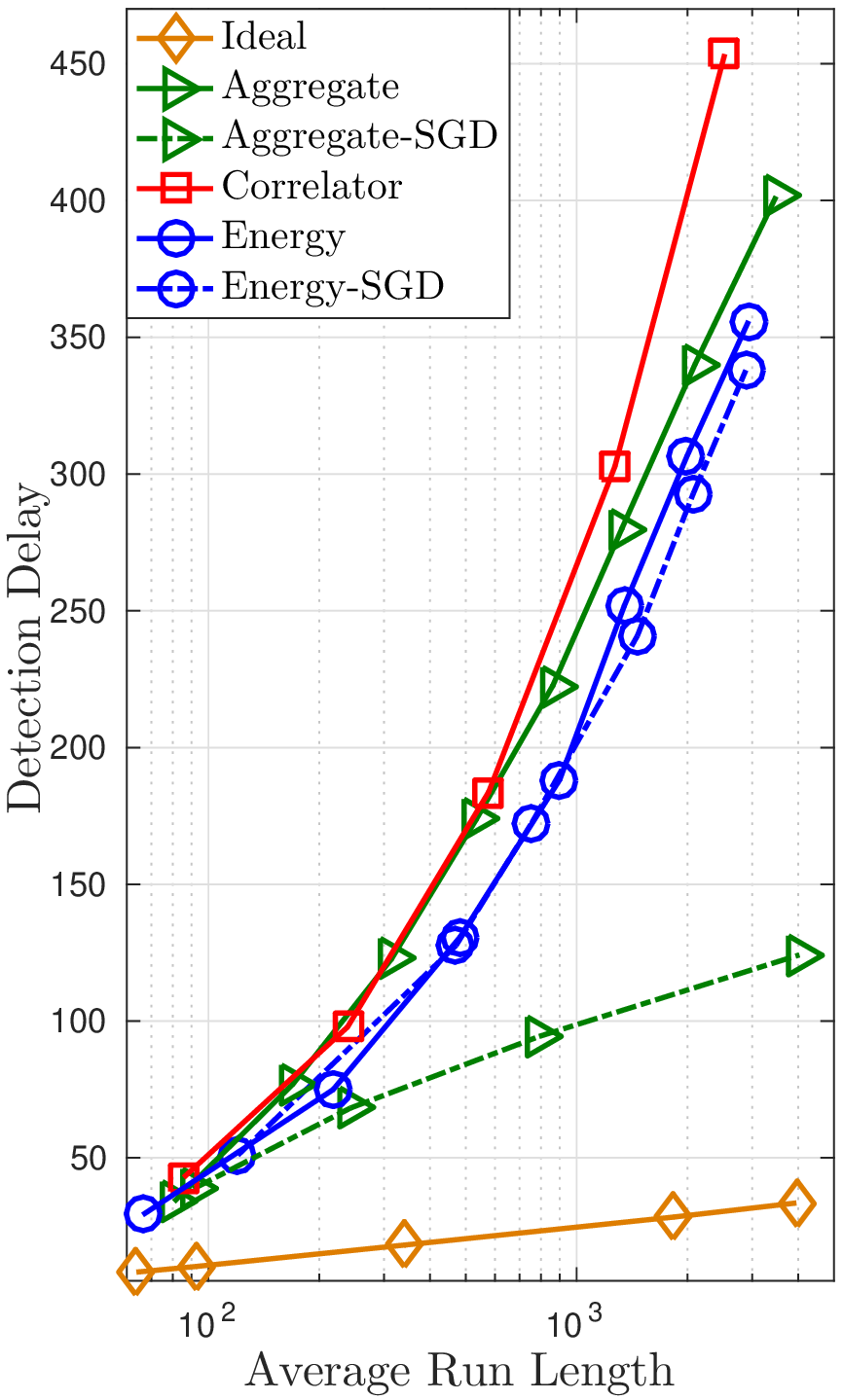}}
  \caption{Comparison of various decision statistics using SIC POVM sensing matrix with  $M=124, N=200, K=5$ and $\sigma_n^2=1$ when $\mc{S},\vec{C_x}$ and $K$ are unknown.}
  \label{fig:sigvar_sparsity_all}
\end{figure}

Some important observations are highlighted below.

\begin{enumerate}

\item If the signal variance is high $\sigma_{\max}^2 = \sigma_n^2$ or the compression ratio is large ($c_r > 0.5$), Aggregate CUSUM performs 
better than Energy and Correlator CUSUM.

\item If the signal variance is small $\sigma_{\max}^2 < \sigma_n^2$ or the compression ratio is small $c_r \ll 0.5$, Energy CUSUM 
based on the signal energy $\|\vec{y}[t]\|_2^2$ performs, in general, better than Correlator and Aggregate CUSUM, 
which use the correlation statistics $\vec{g}[t]$. 

\item In general, SGD CUSUM performs better than the corresponding pdf approximation based counterparts.

\item In most cases, Aggregate-SGD-CUSUM gives the best performance.

\end{enumerate}

\subsection{Percentage of Sparse Recovery}
In addition to detecting change, we are also interested in recovering the support of the signal $\vec{x}[t]$, when the change is detected. Note that, Aggregate CUSUM has an inherent mechanism to find the support by selecting the locations corresponding to the $K$-largest CUSUM metrics \eqref{pcusum}, when the change is detected. On the other hand, for Energy and Correlator CUSUM, once the change is detected, we run OMP algorithm to find the support. We define percentage of support recovery to be the fraction of the support that is recovered correctly. From \Figref{recovery}, for an average run length of $5 \times 10^3$, Aggregate CUSUM gives higher percentage of recovery than OMP at SNR $=-20$ dB and $-10$ dB while OMP is better at SNR $=0$ dB. As we increase the average run length, the change detection delay increases and this increases the percentage support recovery by Aggregate CUSUM. Support recovery using OMP, on the other hand, does not depend on the average run length. We also note that, the percentage recovery is better when the compression ratio $c_r$ is high.
\begin{figure}[htp]
    \centering
    {\includegraphics[scale=0.45]{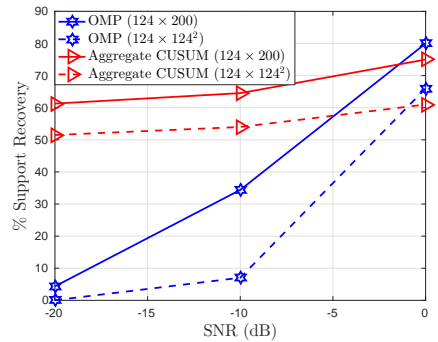}}
 \caption{Percentage recovery with constant  $T_r=5 \times 10^3$.}
    \label{fig:recovery}
\end{figure}
\subsection{Massive Random Access with Timing Offset}
Consider the user activity detection problem in massive random access application. Each user $i \in \{1,\cdots,P\}$ in the network is assigned a unique code/sequence, say, $\vec{a}_i$ of length $M$. In typical scenarios, there are relative timing offsets (due to propagation delays) in the reception of signals transmitted from different users. We assume that the timing offsets of all the users in the network are upper bounded by 
$\Delta \in \mathbb{Z}^{+}$. Suppose $i^{th}$ user has a delay of  $\delta_i \in \{0,\cdots,\Delta\}$, the code sequence 
received at the central node from the $i^{th}$ user will be 
$\vec{a}_{i,\delta_i}^T=[\vec{0}_{\delta_i}^T \hspace{2mm} \vec{a}_i^T \hspace{2mm} \vec{0}_{\Delta-\delta_i} ^T]$. 
The timing offset values of users are not usually available at the central node. Let $P$ be the number of users in the system. We form an augmented sensing matrix $\vec{A}_{\Delta}$ of size $(M+\Delta) \times P (\Delta+1)$
which contains all sequences (including all the possible timing offsets) of the form 
$\{\vec{a}_{i,\delta_i}\},~1\leq i \leq P, 0 \leq \delta_i \leq \Delta$. After the change point, a subset of users
$\mc{S}$ become active so that the observations are $\vec{y}[t] = \vec{A}_{\Delta} \vec{x}_{\Delta}[t] + \vec{n}[t]$, where the locations of the
non zero entries in $\vec{x}_{\Delta}[t]$ indicate the active users and their corresponding timing offsets. 
  
We consider code constructions based on SIC POVM and compare it with Gold codes. For SIC-POVM of length $M$, 
there are a total $M^2$ sequences. From the construction in \secref{SIC_POVM},
all the cyclic shifts of a SIC POVM sequence are also SIC POVM sequences. However, since the augmented matrix $\vec{A}_{\Delta}$ contains 
time shifts up to $\Delta$ for each sequence, we can use only $M\lfloor \frac{M}{\Delta+1} \rfloor$ sequences as valid codes (in order to 
avoid the scenario where code of one user is highly correlated with the time delayed code of another user). Hence, the maximum number of users
that can be accommodated with $M$ length SIC POVM codes is $P = M\lfloor \frac{M}{\Delta+1} \rfloor$. 
In a very similar manner, we can construct $\vec{A}_{\Delta}$ from the cyclic shifts of bipolar Gold codes.  
Gold codes exist for $M=2^n-1,~n\in \mathbb{Z}^{+}$ and we  denote them as $\{\vec{b}_i,~i=1,\cdots,M\}$. With $\vec{b}_{i,\tau}$ denoting
the cyclic shift of $\vec{b}_i$ by $\tau$, it has been shown in \cite{Gold:TIT:67} that 
$|\langle\vec{b}_{i,\tau_1},\vec{b}_{k,\tau_2}\rangle| \leq r(n)$ where
\begin{align*}
 &r(n)= 
\begin{cases}
\frac{2^{\frac{n+1}{2}}+1}{M},\ \text{if $n$ is odd},\\[0.5ex]
\frac{2^{\frac{n+2}{2}}+1}{M}, \ \text{if $n$ is even}.
\end{cases}
\end{align*}

In our simulations, we fix the number of users to be $P=1500$ and the maximum admissible timing offset as $\Delta=8$. 
We set $M=124$ for SIC POVM and $M=127$ for Gold codes. For this scenario, the mutual coherence of the augmented sensing matrix 
$\vec{A}_\Delta$ for SIC POVM and Gold code based constructions is $\alpha=0.1564$ and $\alpha=0.1969$, respectively. 
\Figref{mra}(a) shows that the user activity detection delay versus average run length is nearly the same for both constructions, 
with Energy and Aggregate CUSUM. In \Figref{mra}(b), we show the percentage of correctly identified users at the point when change 
is detected. Aggregate CUSUM performs better support recovery at low SNR while OMP performs better at high SNR. Also, the SIC POVM based sensing matrix performs better than those constructed using Gold codes.

\begin{figure}
    \centering 
\subfloat[Performance at SNR$=-10$ dB]{\includegraphics[scale=0.42]{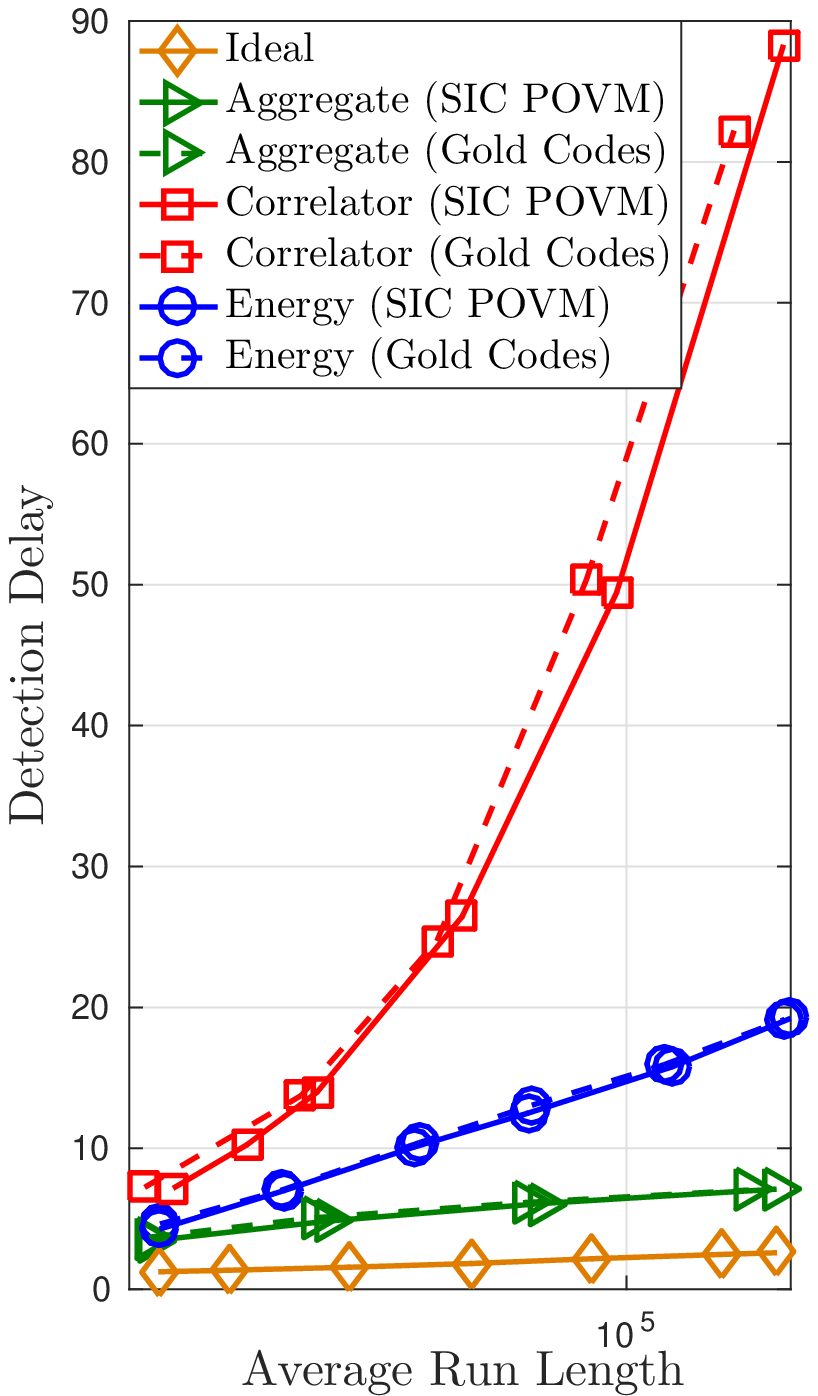}}
\subfloat[User identification v/s SNR ]{\includegraphics[scale=0.42]{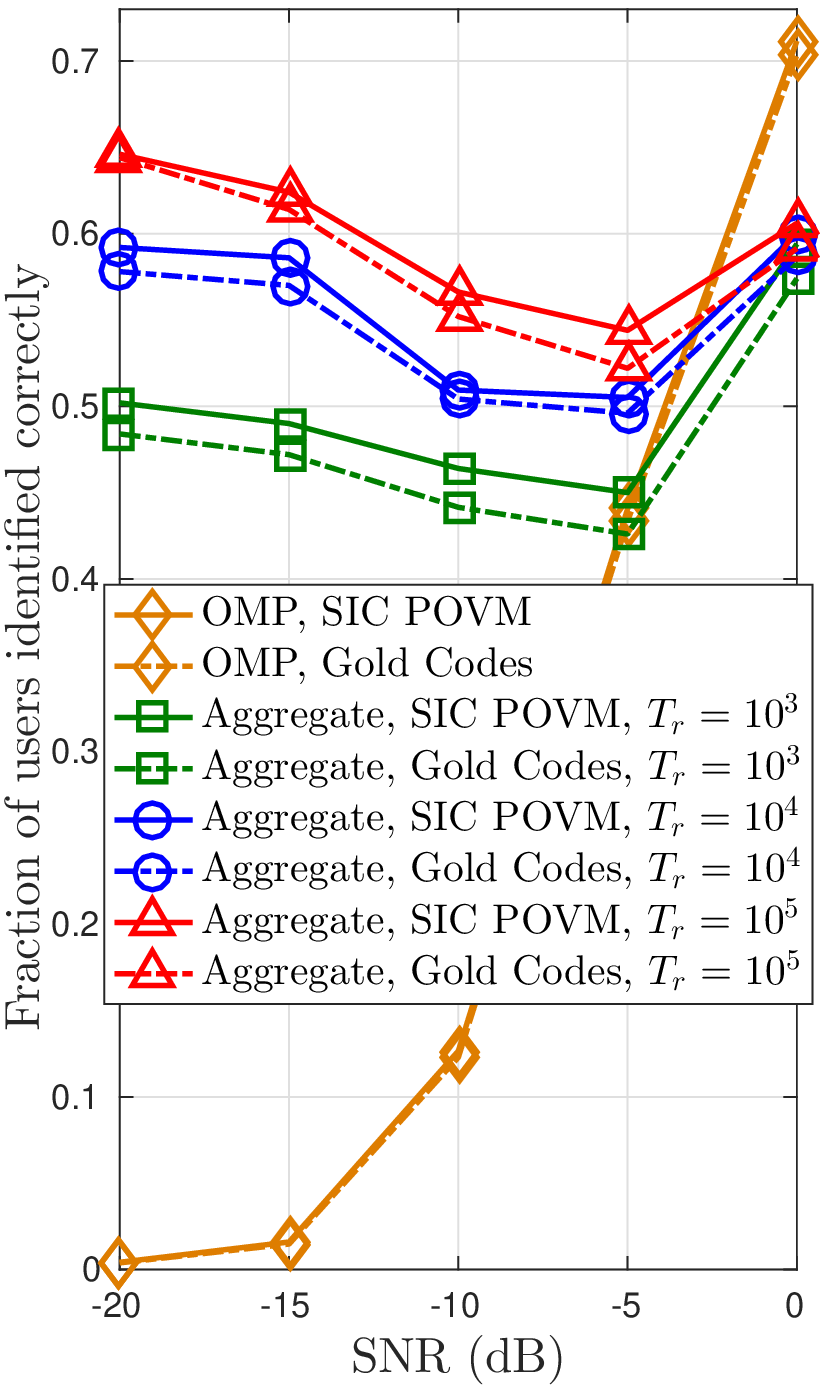}}
 \caption{Performance and user identification for massive random access with $P=1500, \Delta=8, K=5 $.}
    \label{fig:mra}
\end{figure}

\section{Conclusion}
In this paper, we address the  change detection problem with sparse signals  by combining the techniques from compressive sensing with asymptotically optimal CUSUM algorithm. We use the pdf-approximation and parameter-estimation based approaches when the support, signal variance and sparsity level of the signal are unknown.  Using deterministic sensing matrices with low mutual coherence further enhances the detection performance. We also analyze the detection performance of various decision statistics at different SNR levels. The problem of change detection when the non-zero entries of the sparse signal are correlated in time may be of interest for specific applications and may serve as a future scope of this work. Also, further research may be taken up to develop alternate techniques for detection when the distribution of the non-zero entries of the sparse signal, after the change point, is not known a priori.

\appendices
\section{Approximate pdf for Signal Energy}
\label{app:E_sigvar}
The covariance matrix of post change observation \eqref{Y_post} is $\vec{C}_\vec{y} = \vec{A}_{\mc{S}} \vec{C_x}\vec{A}_{\mc{S}}^* + \sigma_n^2 \vec{I}_M$. Let $\vec{\bar{U}}\vec{\Phi} \vec{\bar{U}}^*$ denote the eigen decomposition of $\vec{A}_{\mc{S}} \vec{C_x}\vec{A}_{\mc{S}}^*$. 
Note that, when $\vec{A}_{\mc{S}}$ is of rank $K$, then $\vec{\Phi}$ has $K$ real, non-zero eigenvalues. For convenience, let first $K$ entries $\{\phi_i, 1\leq i \leq K \}$ in the diagonal be non-zero.    
We define $\vec{z}[t]=\vec{\bar{U}}^* \vec{y}[t]$ and note that $\norm{\vec{z}[t]}_2^2= \norm{\vec{y}[t]}_2^2$, since $\vec{\bar{U}}$ is unitary. Now, $\vec{z}[t]$ is also a zero mean complex Gaussian random vector with covariance matrix 
$\vec{C}_\vec{z} = \vec{\Phi} + \sigma_n^2 \vec{I}_M$.
Since the off diagonal elements of $\vec{C}_\vec{z}$ are zero, the entries of $\vec{z}[t]$ are uncorrelated and hence, independent. Each entry of $\vec{z}[t]$ is distributed as
\begin{eqnarray*}
 z_i  \sim
    \begin{cases}
      \mathrm{CN}(0, \sigma^2_n + \phi_i ) , & 1 \leq i \leq K, \\
       \mathrm{CN}(0, \sigma^2_n), & i > K.
    \end{cases}
\end{eqnarray*}
 We have $\norm{\vec{z}}^2_2= \sum_{i=1 }^M |z_i|^2$, with each $|z_i|^2$ being distributed as
\begin{eqnarray*}
   |z_i|^2  \sim
    \begin{cases}
      \exp{\Big(\frac{1}{ \sigma^2_n+\phi_i }\Big)} , & 1 \leq i \leq K, \\
       \exp{\big(\frac{1}{\sigma^2_n}\big)}, & i > K.
    \end{cases}
\end{eqnarray*}
The Gramian matrices, $\vec{A}_\mc{S} \vec{C_x} \vec{A}_\mc{S}^* =\vec{A}_\mc{S} \vec{C_x}^{\frac{1}{2}}( \vec{A}_\mc{S} \vec{C_x}^{\frac{1}{2}})^*$ and $ (\vec{A}_\mc{S} \vec{C_x}^{\frac{1}{2}})^* \vec{A}_\mc{S}\vec{C_x}^{\frac{1}{2}}$ have the same $K$ non-zero eigenvalues. The entry in the $i$-th row and $\ell$-th column of  the $K \times K$ matrix $ (\vec{A}_\mc{S} \vec{C_x}^{\frac{1}{2}})^* \vec{A}_\mc{S}\vec{C_x}^{\frac{1}{2}}$ is $\langle \vec{a}_i,\vec{a}_\ell \rangle \sqrt{\sigma_{i}^2 \sigma_{\ell}^2}$, where $\vec{a}_i$ is the $i$-th column of $\vec{A}_{\mc{S}}$. 
Using the Gershgorin's circle theorem, we get the following bounds on each eigenvalue $\lbrace \phi_i: i=1 ,\dots,K \rbrace$,
 \begin{eqnarray}
 \label{eq:circle_theorem}
 \sigma_{i}^2 -  \alpha \sum\limits_{\substack{ \ell = 1,\cdots,K \\ \ell \neq i }} \sqrt{ \sigma_{i}^2 \sigma_{\ell}^2} \leq \phi_i \leq \sigma_{i}^2 + \alpha \sum\limits_{\substack{ \ell = 1,\cdots,K \\ \ell \neq i }}
\sqrt{ \sigma_{i}^2 \sigma_{\ell}^2}.
 \end{eqnarray}
Due to CLT, we use a Gaussian approximation for the post change pdf of $e[t]$, i.e., $\tilde{f}_1^E=\mathrm{N}(\mu_{E},\sigma^2_{E})$, where,
\begin{align}
\label{eq:E_mean}
\mu_E &=\E(E[t]) = \E(\norm{\vec{z}}_2^2)= \sum_{i=1}^M \E(|z_i|^2) \nonumber \\
 &= \sum_{i=1}^K  \E(|z_i|^2) + \sum\limits_{\substack{i > K }}  \E(|z_i|^2) \nonumber \\
 &= \sum_{i=1}^K  (\phi_i + \sigma^2_n) + (M-K)(\sigma_n^2) \nonumber \\
 &=\sum_{i=1}^K \phi_i + M\sigma_n^2.\\
\sigma_E^2 &= \text{Var}(E[t])=\text{Var}(\norm{\vec{z}}_2^2)= \sum_{i=1 }^M \text{Var}(|z_i|^2) \nonumber\\
 &= \sum_{i = 1}^K \text{Var}(|z_i|^2) + \sum_{i > K} \text{Var}(|z_i|^2) \nonumber\\
 &= \sum_{i = 1}^K  (\phi_i + \sigma^2_n)^2+(M-K)(\sigma_n^2)^2 \nonumber\\
\label{eq:E_var}
&= \sum_{i =1 }^K  \phi_i^2 +2\sigma_n^2\sum_{i=1}^K \phi_i+M(\sigma_n^2)^2.
\end{align} 
Next, we consider the following special cases:

Case 1: The unknown signal covariance matrix $\vec{C_x}$ is of the form $\diag([\sigma^2_1,\sigma_2^2,\cdots,\sigma_K^2])$. 
Since, $\sigma_{\min}^2 \leq \sigma_i^2 \leq \sigma_{\max}^2$, the worst case KL distance between post and pre-change pdf occurs when 
$\sigma_i^2 = \sigma_{\min}^2, 1\leq i\leq K$. Assuming this and using the bound in \eqref{circle_theorem}, we approximate the post change pdf with 
the smallest possible values for mean in \eqref{E_mean} and variance in \eqref{E_var}, 
which are given in \eqref{E_mean_approx} and \eqref{E_var_approx} respectively.  

Case 2: The signal covariance is of the form $\vec{C_x}=\sigma_x^2\vec{I}_K$. In \eqref{E_mean}, the sum of eigenvalues of the matrix $\sigma_x^2\vec{A}_\mc{S}\vec{A}_\mc{S}^*$ is $\sum_{i=1}^K\phi_i=\tr(\sigma_x^2\vec{A}_\mc{S}\vec{A}_\mc{S}^*) = \tr(\sigma_x^2\vec{A}_\mc{S}^*\vec{A}_\mc{S})=K\sigma_x^2$, which gives the mean in \eqref{E_post}. 
 Using \eqref{circle_theorem}, we substitute $\phi_i \geq  \phi_{\min} =\max\big\{0, \sigma_{x}^2 (1 - \alpha (K-1))\big \}$ for $1 \leq i\leq K$ in \eqref{E_var}, to compute the variance in \eqref{E_post} for worst case KL distance.

\section{Approximate Pdf for Correlator Statistics}
\label{app:G_sigvar}
After the change, $\vec{y}[t]= \sum_{{\ell}\in \mc{S} } \vec{a}_{\ell} x_{\ell} [t]  + \vec{n}[t]$  from \eqref{y_in_terms_supp}. Thus, $i$-th entry of $\vec{g}[t]$ is $g_i[t]= \langle \vec{a}_i,\vec{y}[t] \rangle = \langle \vec{a}_i, \sum_{{\ell} \in \mc{S} } \vec{a}_{\ell} x_{\ell}[t] + \vec{n}[t] \rangle$. Thus we have,
\begin{align*}
g_i[t]    &=\begin{cases}
     \langle  \vec{a}_i,\vec{n}[t] \rangle +  \sum\limits_{\substack{{\ell}\in \mc{S} \\ {\ell}\neq i}}\langle\vec{a}_i, \vec{a}_{\ell} \rangle x_{\ell} [t]+ \langle \vec{a}_i, \vec{a}_i\rangle x_i[t]  
       , \hspace{1mm}  \forall i \in \mc{S},\\
       \langle  \vec{a}_i,\vec{n}[t] \rangle + \sum\limits_{\substack{{\ell}\in \mc{S}}}\langle\vec{a}_i, \vec{a}_{\ell} \rangle x_{\ell}[t] , \hspace{22.5mm}   \forall i \notin \mc{S}.
    \end{cases}
 \end{align*}
For each of the above cases, the post-change pdf of  $g_i[t]$ can be approximated as
\begin{equation}
\label{eq:G_i}
 \tilde{f}_{1}^{G_i} =
    \begin{cases}
      \mathrm{CN}(0, \sigma_n^2  + \alpha ^2  \sigma_{\textrm{sum}}^2 + (1- \alpha^2)\sigma_{i}^2) , &  \forall i \in \mc{S},\\
       \mathrm{CN}(0,  \sigma_n^2 + \alpha ^2 \sigma_{\textrm{sum}}^2 , &  \forall i \notin \mc{S}.
    \end{cases}
 \end{equation}
where $\sigma_{\textrm{sum}}^2 = \sum_{\ell \in \mathcal{S}} \sigma_{\ell}^2$. The post-change pdf of the squared modulus of each entry $|g_i[t]|^2$ of $\vec{g}[t]$ is 
\begin{align}
\label{eq:C}
\tilde{f}_{1}^{|G_i|^2}  &=
    \begin{cases}
     \exp (\lambda_{\mc{S}}), &  \forall i \in \mc{S}, \\
      \exp(\lambda_{0}), &  \forall i \notin \mc{S},
    \end{cases} 
    \\   \text{where} \ \lambda_0 &= \frac{1}{\sigma_n^2 + \alpha ^2  \sigma_{\textrm{sum}}^2}, \lambda_\mc{S}= \frac{1}{\sigma_n^2  +  \alpha ^2  \sigma_{\textrm{sum}}^2  + (1- \alpha^2) \sigma_{i}^2} \nonumber.
\end{align}
Now, the maximum correlation is given by $c[t]= \underset{i=1,\dots, N}{\operatorname{max}} \lbrace |g_i[t]|^2 \rbrace$, which can be restated as
 \begin{align*} 
 c[t] &= \max{ \Big\lbrace \max{ \Big[ |g_i[t]|^2 \big\rvert_{i \in \mc{S}} \sim \exp{(\lambda_\mc{S})} \Big]} }, \\ 
 & \max{ \Big[|g_i[t]|^2 \big\rvert_{i \notin \mc{S}} \sim \exp{(\lambda_0)} \Big]} \Big \rbrace,
 \end{align*}
where $\lambda_{\mc{S}}$ and $\lambda_0$ are given in \eqref{C}.
Assuming the independence between the entries of $\vec{g}[t]$ (which is good approximation when the covariance between the entries is small), the post-change pdf of $c[t]$ is approximated as in \eqref{C_post}.

Consider the following special cases. 

Case 1: The unknown signal covariance matrix $\vec{C_x}$ is of the form $\diag([\sigma_1^2, \sigma_2^2, \cdots, \sigma_K^2])$. 
Again, the worst case KL distance between post and pre-change pdf occurs when $\sigma_i^2 = \sigma_{\min}^2,~\forall i$.
Now, substituting  $ \sigma_{\textrm{sum}}^2 = K\sigma_{\min}^2$ and $\sigma_i^2 = \sigma_{\min}^2$ in \eqref{G_i}, we get the approximate 
post-change pdf of $g_i[t]$, as given in \eqref{Gi_approx}. For $c[t]$, the post-change pdf is given by \eqref{C_post} and the modified values of rate parameters $\lambda_0$ and $\lambda_{\mc{S}}$ are obtained using similar approximations and are given by \eqref{C0_approx} and \eqref{Cs_approx}.

Case 2: The signal covariance matrix is of the form $\vec{C_x}=\sigma_x^2\vec{I}$. We substitute  $\sigma_{\textrm{sum}}^2 = K\sigma^2_x$ and $\sigma_i^2 =\sigma_x^2$ in \eqref{G_i} and \eqref{C} to obtain the approximate post-change pdfs of $g_i[t]$ and $c[t]$ respectively, as given in \eqref{G_post} and \eqref{C_post}.

\bibliographystyle{ieeetr}
\bibliography{macros_abbrev,references}

\end{document}